\newtheorem{example}{Example}[section]
\newtheorem{definition}{Definition}[section]
\newtheorem{theorem}{Theorem}[section]
\newtheorem{lemma}[theorem]{Lemma}
\newtheorem{claim}[theorem]{Claim}
\newcommand{\be}{\begin{equation}}
\newcommand{\ee}{\end{equation}}
\newcommand{\beq}{\begin{equation*}}
\newcommand{\eeq}{\end{equation*}}
\newcommand{\R}{\mathbb{R}}
\newcommand{\eps}{\varepsilon}
\newcommand{\AutoAdjust}[3]{\mathchoice{ \left #1 #2  \right #3}{#1 #2 #3}{#1 #2 #3}{#1 #2 #3} }
\newcommand{\Xcomment}[1]{{}}
\newcommand{\InBrackets}[1]{\AutoAdjust{[}{#1}{]}}
\newcommand{\Ex}[2][]{\operatorname{\mathbf E}_{#1}\InBrackets{#2}}
\newcommand{\Exlong}[2][]{\operatornamewithlimits{\mathbf E}\limits_{#1}\InBrackets{#2}}
\newcommand{\Prx}[2][]{\operatorname{\mathbf{Pr}}_{#1}\InBrackets{#2}}
\newcommand{\Prlong}[2][]{\operatornamewithlimits{\mathbf{Pr}}\limits_{#1}\InBrackets{#2}}
\newcommand{\eqdef}{\overset{\mathrm{def}}{=\mathrel{\mkern-3mu}=}}
\newcommand{\vect}[1]{\ensuremath{\mathbf{#1}}}
\newcommand{\RN}[1]{
  \textup{\uppercase\expandafter{\romannumeral#1}}
}
\newcommand\restr[2]{{
  \left.\kern-\nulldelimiterspace 
  #1 
  \vphantom{\big|} 
  \right|_{#2} 
  }}
\def\prob{\Prx}
\newcommand{\alg}{\textsf{ALG}}
\newcommand{\opt}{\textsf{OPT}}
\newcommand{\fpi}[2]{#1^{#2}}
\newcommand{\vals}{\vec{v}}
\newcommand{\dist}{\mathbf{F}}
\newcommand{\dists}{\vect{\dist}}
\title{Choosing Behind the Veil:\\ Tight Bounds for Identity-Blind Online Algorithms\thanks{
The work of M. Feldman has been partially funded by the European Research Council (ERC) under the European Union's Horizon 2020 research and innovation program (grant agreement No. 866132), by an Amazon Research Award, by the NSF-BSF (grant No. 2020788), and by a grant from TAU Center for AI and Data Science (TAD). T.~Ezra is supported by the Harvard University Center of Mathematical Sciences and Applications. Z.~Tang is supported by Program for Innovative Research Team of Shanghai University of Finance and Economics (IRTSHUFE) and the Fundamental Research Funds for the Central Universities.
}}
\author{
Tomer Ezra\thanks{Harvard University. Email: \texttt{tomer@cmsa.fas.harvard.edu}}
\and
Michal Feldman\thanks{Tel Aviv University, Email: \texttt{michal.feldman@cs.tau.ac.il }}
\and
Zhihao Gavin Tang\thanks{ITCS, Key Laboratory of Interdisciplinary Research of Computation and Economics, Shanghai University of Finance and Economics. Email: \texttt{tang.zhihao@mail.shufe.edu.cn}}
}
\date{}
\begin{document}

\begin{titlepage}
\maketitle
\thispagestyle{empty}

\begin{abstract}

In Bayesian online settings, every element has a value that is drawn from a known underlying distribution, which we refer to as the element's {\em identity}. 
The elements arrive sequentially.
Upon the arrival of an element, its value is revealed, and the decision maker needs to decide, immediately and irrevocably, whether to accept it or not.
While most previous work has assumed that the decision maker, upon observing the element's value, also becomes aware of its identity --- namely, its distribution --- practical scenarios frequently demand that decisions be made based solely on the element's value, without considering its identity.
This necessity arises either from the algorithm's ignorance of the element's identity or due to the pursuit of fairness, ensuring uniform decisions across different identities. 
We call such algorithms {\em identity-blind} algorithms, and propose the {\em identity-blindness gap} as a metric to evaluate the performance loss in online algorithms caused by identity-blindness. This gap is defined as the maximum ratio between the expected performance of an identity-blind online algorithm and an optimal online algorithm that knows the arrival order, thus also the identities.

We study the identity-blindness gap in the paradigmatic prophet inequality problem, under the two common objectives of maximizing the expected value, and maximizing the probability to obtain the highest value, and provide tight bounds with respect to both objectives. 
For the max-expectation objective, the celebrated prophet inequality establishes a single-threshold (thus identity-blind) algorithm  that gives at least $1/2$ of the offline optimum, thus also an identity-blindness gap of at least $1/2$. 
We show that this bound is tight even with respect to the identity-blindness gap.
We next consider the max-probability objective. While the competitive ratio is tightly $1/e$, we provide a deterministic single-threshold (thus identity-blind) algorithm that gives an identity-blindness gap of $\sim 0.562$ under the assumption that there are no large point masses. Moreover, we show that this bound is tight with respect to deterministic algorithms. 
\end{abstract}
\end{titlepage}

\section{Introduction}

In online settings, the input arrives sequentially over time, and the {\em online algorithm} makes decisions online, without knowing future arrivals.
A paradigmatic online problem in Bayesian settings is the {\em prophet inequality} \citep{krengel1977semiamarts,krengel1978semiamarts,samuel1984comparison}. 
In a prophet inequality setting, there are $n$ boxes, arriving online, each box $t=1,\ldots,n$ contains a value $v_t$ drawn from a known probability distribution $F_t$. 
Upon the arrival of box $t$, its value $v_t$ is revealed, and the online algorithm needs to decide immediately and irrevocably whether to accept it or not. 
If accepted, the game ends with a reward of $v_t$. 
Otherwise, the reward $v_t$ is lost forever and the game proceeds to the next box (if any).
The prophet inequality problem captures many real-life scenarios, and in particular market design problems and the design of pricing mechanisms \citep{hajiaghayi2007automated}. 

In the analysis of online algorithms, the primary metric used to assess performance is known as the {\em competitive ratio}, defined as the worst-case (across all distributions) ratio between the expected performance of the online algorithm and that of the optimal offline algorithm. Essentially, the performance of an online algorithm is gauged by comparing its expected performance to that of a hypothetical ``prophet", capable of making decisions with future knowledge. 

It is well-known that the prophet inequality problem admits an online algorithm that has a competitive ratio of a half. Remarkably, the optimal competitive ratio can be achieved by a simple single threshold algorithm that accepts the first reward that exceeds the threshold, while no online algorithm can obtain a better competitive ratio. 

However, the prophet benchmark is quite strong and hypothetical, and the performance of online algorithms has also been measured with respect to weaker benchmarks. 
One such benchmark is the optimal online algorithm. 
Research in this direction focuses on comparing the best polynomial-time online algorithm to the best online algorithm \citep{papadimitriou2021online,braverman2022max,srinivasan2023online}, quantifying the loss in online algorithms due to computational constraints. 
Similarly, \cite{niazadeh2018prophet} study the ratio between the best single-threshold algorithm and the best overall online algorithm, quantifying the loss that arises due to the simplicity inherent in single-threshold algorithms. 

Building upon this line of research, \cite{ezra23order} recently introduced the {\em order-competitive ratio} measure, quantifying the loss in online algorithms that arises due to the lack of knowledge about the arrival order of elements. The order-competitive ratio is defined as the ratio of the performance of the best online algorithm that operates without knowledge of the arrival order, to that of the best online algorithm that has full knowledge of the arrival order.

A fundamental assumption central to the analysis in \cite{ezra2018prophets} concerns the information available to the algorithm upon the arrival of each element. Specifically, even though the algorithm does not know the arrival order, it is assumed to be informed of each element's {\em identity} upon its arrival. Namely, upon the arrival of an element at time $t$, with a revealed value $v_t$, the algorithm is aware of the underlying distribution from which $v_t$ has been drawn and can use this information (along with the identities of elements that arrived earlier) in its decision whether to accept it or not.

In many practical scenarios, however, decision-making must rely solely on the value of an element, without accounting for its identity (i.e., population). This requirement is often necessitated by either the algorithm's lack of knowledge about the element's identity or a deliberate choice aimed at ensuring fairness, thus guaranteeing uniform decisions for different identities.

A prime example of this can be seen in item sales, a topic extensively explored within the prophet inequality framework \citep{hajiaghayi2007automated,DBLP:conf/stoc/ChawlaHMS10}.
In such contexts, there is a strong preference for non-discriminatory pricing strategies. These strategies offer the same price to customers presenting identical values, regardless of the customer's specific identity, which is typically represented by the underlying distribution of their population.
This approach addresses fairness concerns and regulatory compliance standards. Importantly, this pricing strategy may be preferred even in situations where sellers are aware of the buyers' identities.

As another illustrative example, orchestras globally conduct auditions behind a curtain, ensuring judges cannot see the performer's identity. This practice of blind auditions has been established as a norm in symphony orchestras to guarantee that evaluations are based exclusively on the candidate's performance quality (i.e., value) rather than their gender or ethnic background (i.e., underlying distribution).
Research has demonstrated that blind auditions effectively reduce gender biases during the selection process \cite{ClaudiaC00}.
Similarly, in a double-blind review process, the evaluation of submissions is conducted purely based on the academic merit of the work and its contribution to the field (i.e., value), independent of any knowledge of the authors' scholarly reputations or affiliations (i.e., their underlying distribution). 

\paragraph{Identity-blind online algorithms.}
To address this issue, we study the performance of online algorithms that do not know the arrival order, neither the identity of the arriving elements. We call such algorithms {\em identity-blind} algorithms. 
An identity-blind algorithm observes $v_t$ (i.e., the realized value at time $t$), but not $F_t$ (i.e., the underlying distribution from which $v_t$ has been drawn).

We introduce a new measure, called the {\em identity-blindness gap}, defined as the worst-case ratio (over all distribution sequences) between 
the performance of the best {\em identity-blind} algorithm and the best online  algorithm that knows the arrival order, thus also the identities of the arriving elements.
Clearly, the identity-blindness gap is upper bounded by the order-competitive ratio studied in \cite{ezra23order}.

For illustration, consider the following example.

\begin{example} \label{ex:motivation} (see Figure~\ref{fig:example})
Suppose there are three boxes. Box 1 with deterministic value 0, Box 2 with deterministic value $1/2$, and Box 3 with value $1/\epsilon$ with probability $\epsilon$ (and $0$ otherwise), for an arbitrarily small $\epsilon$. 
Suppose further that the objective is to maximize the expected value of the accepted box.   

If the boxes arrive according to the order $1,2,3$, then the optimal online algorithm (that knows the order) rejects the first two boxes, for an expected value of $1$. 
If the boxes arrive according to the order $3,2,1$, then the optimal online algorithm accepts $1/\epsilon$ if realized in the first box, otherwise rejects it and accept $1/2$ in Box $2$, for an  expected value of $1.5$.

With probability $1-\epsilon$, the first two values in both orders are $0$ followed by $1/2$. 
Thus, unless the ``$\epsilon$ event" happens, an identity-blind algorithm $\alg$ cannot distinguish between the two orders. 
If $\alg$ accepts Box 2 after observing $0,1/2$, then in the first order it gets $1/2$, compared to the optimal value of $1$. 
If $\alg$ declines Box 2 after observing $0,1/2$, then in the second order it gets $1$, compared to the optimal value of $1.5$.
Thus, no deterministic algorithm can guarantee an identity-blindness gap better than $2/3$. 
Note that if we change the value of Box 2 from $1/2$ to $1/\phi$ (where $\phi$ is the golden ratio), then the same example shows that no deterministic identity-blind algorithm can guarantee a gap better than $1/\phi \approx 0.618$.
\label{ex:gap}
\end{example}

\begin{figure}
\centering
\includegraphics[width=0.9\textwidth]
{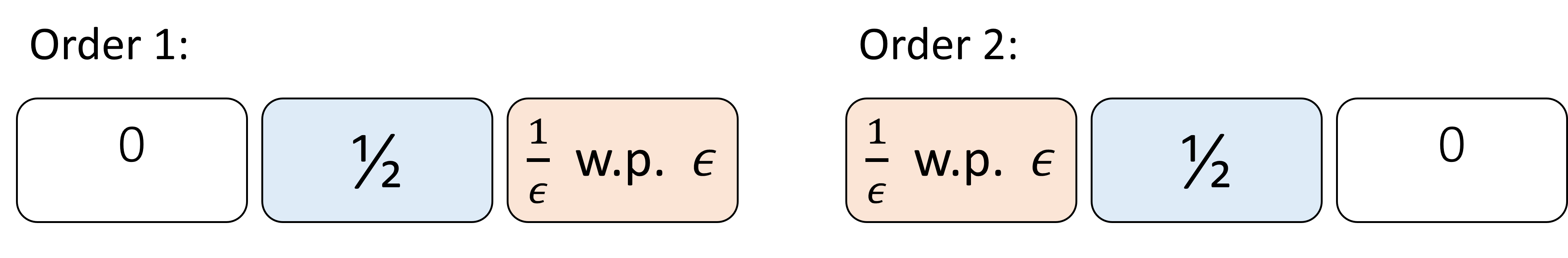}
\caption{An example showing an upper bound of $2/3$ on the identity-blindness gap of deterministic algorithms.
\label{fig:example}}
\end{figure}

\subsection{Our Results and Techniques}

We provide tight bounds for the identity-blindness gap, with respect to the following two extensively-studied  objective functions: 
\begin{itemize}
    \item Max expectation objective: maximizing the expected accepted value.
    \item Max probability objective: maximizing the probability to accept the maximum value.
\end{itemize}

\paragraph{{\bf Max expectation objective.}}
The celebrated prophet inequality establishes the existence of a single-threshold algorithm that achieves a competitive ratio of $1/2$ (with respect to the prophet benchmark). 
Since a single-threshold algorithm is identity-blind by definition, and the prophet benchmark is stronger than the best online algorithm benchmark, this result gives us an immediate lower bound of $1/2$ on the identity-blindness gap.

Our main result for the max expectation objective is that $1/2$ is tight with respect to the identity-blindness gap, and even with respect to randomized strategies. 

\vspace{0.1in}
\noindent {\bf Main Theorem 1:} No identity-blind algorithm, deterministic or randomized, can achieve a better identity-blindness gap than $1/2$ with respect to the max expectation objective. 
\vspace{0.1in}

Notably, this result establishes a separation between the performance of (order-unaware) algorithms that can observe the identities of the arriving elements and those that cannot. Indeed, \cite{ezra23order} establish a superior order-competitive ratio of $0.618$ (the inverse of the golden ratio) using an algorithm that takes the identities into account. 
This result thus present a separation between what can be achieved with and without discrimination based on identities.

Our proof relies on extending the original hardness example for the prophet setting (where there is a  deterministic box with a value of $1$, followed by a randomized box with a value of $1/\varepsilon$ with probability $\varepsilon$ and $0$ otherwise) to an instance in which an identity-blind algorithm cannot distinguish which of the boxes arrived already or not.
We do so by creating an instance with four types of boxes: boxes with a value of $0$, Bernoulli boxes with a value $1$, a deterministic box with a value of $1$, and a randomized box with a value of $1/\varepsilon$ with probability $\varepsilon$, and $0$ otherwise.
We then construct a distribution over arrival orders, where the Bernoulli boxes arrive first (mixed with some of the $0$ value boxes) followed by the randomized box, followed by the value $1$ box, followed by the remaining value $0$ boxes. An algorithm that knows that this is the arrival order, can get a value of $2-\varepsilon$ by selecting the randomized box if realized, and if not selecting the deterministic box with value $1$ (that always arrives after).
An identity-blind algorithm, cannot tell from only observing the values whether all the Bernoulli boxes arrive already or not, and cannot guess it with high enough probability to obtain a value that is significantly larger than $1$.

\paragraph{{\bf Max probability objective.}}
\citet{esfandiari2017prophet} devise an order-unaware single-threshold algorithm that gives a tight (up to lower-order terms) competitive ratio of $1/e$ (with respect to the prophet benchmark). 
Since an order-unaware single-threshold algorithm is by definition identity-blind, and since the prophet benchmark is at least as high as the best online algorithm benchmark, the $1/e$ ratio applies to the identity-blindness gap as well.
On the other hand, the upper bound of $0.806$ on the order-competitive ratio of \citet{ezra23order} applies here as well. Thus, the identity-blindness gap lies between $1/e$ and $0.806$.

Our second main result is a
tight (with respect to deterministic algorithms) bound on the identity-blindness gap with respect to the max probability objective. 

\vspace{0.1in}
\noindent {\bf Main Theorem 2:} There exists a deterministic single-threshold (thus identity-blind) algorithm that gives an identity-blindness gap of at least $\sim 0.562$ (see remark below). 
Moreover, no deterministic algorithm can obtain a better identity-blindness gap.
\vspace{0.1in}

\noindent {\bf Remark:} The positive result holds under a natural assumption that the distribution of the maximum value has no large point masses.
In particular, this assumption implies the existence of a unique maximum with high probability.
The identity-blindness gap approaches $0.562$ as the bound on the point masses goes to $0$.
A discussion on relaxations of this assumption is given in Appendix~\ref{sec:discussion}.

Although a single-threshold algorithm can be captured by a single number (i.e., the threshold itself or the corresponding probability that the maximum value is below it), the optimal algorithm can be adaptive and complicated to describe. This is in stark contrast to the single-threshold versus single-threshold analysis of \citet{ezra23order}, which is essentially a two-parameter optimization problem. To this end, we start by guessing out the worst-case instance for an arbitrary single-threshold algorithm. In principle, there exists a trade-off so that the threshold should neither be too high, nor too low. We restrict ourselves to Bernoulli instances and find that the worst-case instance against an arbitrary single-threshold algorithm involves three stages. Guided by this family of instances, we establish our positive result for general value distributions, where we introduce multiple variants of the game to facilitate the analysis. This is the most technical part of our analysis.

{\bf Discussion.}
We believe that the fact that the best algorithms in our scenarios (for both objectives) are single threshold algorithms is quite remarkable and merits further discussion. 
In the classic prophet inequality (with the prophet as a benchmark), the best competitive ratios (for both objectives) are obtained by single-threshold algorithms \citep{samuel1984comparison,esfandiariHLM20}.
In contrast, \cite{ezra23order} showed that under the benchmark of the best online algorithm, in scenarios where identities are observed, superior ratios are achieved by adaptive algorithms, not attainable through single-threshold algorithms.
Intriguingly, in our setting where identities are unobserved and using the best online algorithm as a benchmark, we see a resurgence of the classic results; specifically, the best ratios are again achieved by static threshold algorithms. 
Moreover, for the max-probability objective, the fixed-threshold algorithm that delivers this result differs from the one employed in the original prophet inequality problem.

\subsection{Related Work}
\paragraph{Different arrival orders.} Several studies have explored various assumptions regarding the order of arrivals, in addition to the adversarial order considered by \citet{krengel1977semiamarts,krengel1978semiamarts,samuel1984comparison}. These studies have considered different scenarios, such as random arrival order (also known as the prophet secretary problem) \citep{esfandiari2017prophet,azar2018prophet,ehsani2018prophet,CorreaSZ21}, as well as free-order settings where the algorithm can dictate the arrival order, as studied by \cite{beyhaghi2018improved,AgrawalSZ20,PengT22,BubnaC23}.
Furthermore, a recent study by \citet{arsenis2021constrained} 
has demonstrated that regardless of the specific arrival order $\pi$, the better option between $\pi$ and the reverse order of $\pi$ achieves a competitive ratio of at least the inverse of the golden ratio.

\paragraph{Alternative benchmarks.}
Recently, there has been
a significant interest in exploring alternative benchmarks to the ``prophet'' benchmark, which represents the optimal offline solution. This interest is reflected in the studies of \citet{kessel2021stationary,niazadeh2018prophet,papadimitriou2021online,ezra23order,DBLP:conf/sigecom/DuttingGRTT23,DBLP:conf/wine/EzraG23}.
For instance, \citet{niazadeh2018prophet} focus on quantifying the loss incurred by single-threshold algorithms compared to the best adaptive online algorithm (which can be single-threshold or not) under a known order. They establish that the tight worst-case ratio between these algorithms is $1/2$.
Another example is presented by \citet{papadimitriou2021online}, who examine the problem of online matching in bipartite graphs. While a 1/2-competitive algorithm exists for this problem concerning the prophet benchmark \citep{FeldmanGL15}, \citet{papadimitriou2021online} introduce a different perspective. They consider the ratio between the optimal polynomial algorithm (which knows the arrival order) and the optimal computationally-unconstrained algorithm (which also knows the arrival order), showcasing that this ratio exceeds $1/2$. (It's worth noting that in the matching variant, finding the optimal online algorithm for matching is computationally challenging even when the order is known.) Subsequent work by \citet{SaberiW21} has improved this bound to $0.526$, which has been improved further to to $1 - 1/e$ \citep{braverman2022max}, and yet further to $0.652$ \citep{srinivasan2023online}.

\paragraph{Beyond single-choice.}
In addition to single-choice settings, there is a related body of research that expands upon the optimal stopping problem to encompass multiple-choice settings. This line of work was initiated by \citet{kennedy1985optimal,kennedy1987prophet,kertz1986comparison}. 
Furthermore, recent advancements have extended this framework to various combinatorial scenarios, including matroids \citep{kleinberg2012matroid,azar2014prophet}, polymatroids \citep{dutting2015polymatroid}, matching markets \citep{GravinW19,EzraFGT20}, combinatorial auctions \citep{FeldmanGL15,dutting2020log}, and general downward-closed (and beyond) feasibility constraints \citep{rubinstein2016beyond}.

\paragraph{Fairness in Prophet inequality.}
Our research is related to a new line of work that incorporates fairness considerations into prophet settings \citep{50313,10.1145/3490486.3538301}.
\citet{50313} study prophet and secretary variants in which the decision-maker is restricted to select boxes proportionally according to predefined ratios (and so is the prophet). \citet{10.1145/3490486.3538301} suggest two fairness notions for prophet settings, namely, identity-independent fairness and time-independent fairness. The main difference in the approach of \citet{50313,10.1145/3490486.3538301} and ours, is that they try to define conditions which they view as fair, and study the implications of imposing these conditions, while we study the implications of hiding the discriminatory information from our algorithms.

\section{Model and Preliminaries}
We consider a setting with $n$ boxes. 
Every box $i$ contains some value $v_i$ drawn from an underlying independent distribution $F_i$.
The distributions $F_1, \ldots, F_n$ are known from the outset. 
The values are revealed sequentially in an online fashion, according to an unknown order $\pi$, i.e., $\pi(i)$ denotes the identity of the $i$-th arriving box. We denote by $\fpi{\dists}{\pi}$ the permuted product distribution $F_{\pi(1)} \times \ldots \times F_{\pi(n)}$.

\paragraph{Identity-blind algorithms.}
An online algorithm is said to be {\em identity-blind} if, upon the arrival of box $\pi(i)$, the algorithm observes the revealed value of box $\pi(i)$, 
but is unaware of the box from which the value has been drawn (namely, the index $\pi(i)$ of the arriving box).
Upon observing the revealed value, the algorithm needs to decide, immediately and irrevocably, whether to accept it or not.

To illustrate the setting, in Example~\ref{ex:motivation}, $n=3$, and $F_1$ is always $0$, $F_2$ is always $1/2$, and $F_3$, is $1/\varepsilon$ with probability $\varepsilon$, and $0$ otherwise.   Under both $\pi^1=(1,2,3)$ and $\pi^2=(3,2,1)$, with probability of at least $1-\varepsilon$, the first two values that the algorithm observes are $0,1/2$, so an identity-blind algorithm must treat this scenario the same, while an order-aware algorithm (that knows $\pi$) can treat these two scenarios differently.

Our goal is to measure the performance of identity-blind algorithms against the benchmark of the best online order-aware algorithm.
We call this ratio the {\em identity-blindness gap}.
This measure is related to the {\em order-competitive ratio} of \cite{ezra23order}, which also measures the ratio between the performance of order-unaware and order-aware algorithms. 
However, while the order-competitive ratio measures the performance of order-unaware algorithms that know the identity of the arriving box, we consider identity-blind algorithms that are not even aware of the box identity. 

Recall that we consider two different objectives, namely the max expectation objective and the max probability objective.

\paragraph{Notation and formal model.}
An algorithm will be denoted by $\alg$. 
Given a product distribution $\dists$, 
and a value profile $\vals$, we denote by $\alg(\dists,\vals)$ the value accepted by $\alg$ under the realization of values $\vals$.
(Note that the algorithm may be either order-aware or identity-blind.)

For the max expectation objective, the performance of $\alg$ is the expected accepted value, i.e., 
$
\alg(\dists,\pi)=\Ex[\vals\sim\dists^\pi]{\alg(\dists,\vals)}
$.

For the max probability objective, it is the probability of accepting the maximum value, i.e., 
$
\alg(\dists,\pi)=\prob[\vals\sim\dists^\pi]{\alg(\dists,\vals)=\max_i v_i}.
$
When studying the max probability objective, we assume that the distribution of the maximum value does not have large mass points\footnote{Formally, we assume that there exists $\varepsilon>0$ such that for every $v\in \R_{\geq 0}$, it holds that $\Pr[\max_i v_i = v] \leq \varepsilon$.}, which captures many settings, such continuous distributions,  or distributions that are closed to be continuous. In Section~\ref{sec:discussion} we elaborate on this assumption, and provide stronger impossibility results when the assumption fails to hold.

The identity-blindness gap measures the loss in performance due to unknown order in cases where the identity is not revealed either. 
It is defined as the worst-case ratio of the performance of $\alg$ and the performance of the optimal algorithm $\opt$, over all arrival orders. 
\begin{definition}\label{def:ocr}
The identity-blindness gap of an identity-blind algorithm $\alg$ is 
$$
\Gamma(\alg) = \inf_{\dists} \min_{\pi}\frac{\alg(\dists,\pi)}{\opt_\pi(\dists,\pi)},
$$
where $\opt_\pi$ is the optimal order-aware algorithm for arrival order $\pi$.\footnote{We note that $\opt_\pi(\dists,\pi)$ is well defined, as it can be solved by dynamic programming.}
\end{definition}

We also define the identity-blindness gap of a family of algorithms $\mathcal{A}$ to be the worst-case identity-blindness gap of any algorithm $\alg \in \mathcal{A}$.

\section{Max Expectation Objective}
\label{sec:max_exp}
Our result for the max-expectation objective is negative. We prove that in the worst case, any (randomized) identity-blind online algorithm cannot achieve an identity-blindness gap better than $\frac{1}{2}$. Recall that a simple single-threshold algorithm achieves a competitive ratio of $\frac{1}{2}$ (against the prophet). We conclude that single-threshold algorithms are optimal among identity-blind algorithms.

\begin{theorem}
For any $\varepsilon>0$, there is no (deterministic or randomized) identity-blind algorithm that can achieve a $\left( \frac{1}{2}+\varepsilon \right)$-identity-blindness gap.
\end{theorem}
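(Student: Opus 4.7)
The plan is to invoke Yao's minimax principle: to rule out randomized identity-blind algorithms, it suffices to design a single distribution $\mathcal{D}$ over pairs $(\dists,\pi)$ such that every deterministic identity-blind $\alg$ satisfies $\E_{\mathcal{D}}[\alg(\dists,\pi)] \le (1/2 + O(\varepsilon)) \cdot \E_{\mathcal{D}}[\opt_\pi(\dists,\pi)]$. The instance would extend the classical prophet-inequality hardness example --- a deterministic value-$1$ box followed by a ``big'' box of value $1/\varepsilon$ w.p.\ $\varepsilon$ and $0$ otherwise --- in which the order is fixed and hence an order-aware online algorithm trivially matches the prophet. To obtain a $1/2$ bound against the stronger order-aware online benchmark, the adversary must randomize the arrival order in a way that cannot be detected from the observed values alone.

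To do so, I would augment the instance with $N$ Bernoulli boxes (each of value $1$ w.p.\ $p$, else $0$) together with many deterministic zero-boxes. The random arrival order interleaves the Bernoulli boxes and some zero boxes into a ``pre-phase'', then places the big box immediately followed by the sure-$1$ box at a random cut-point, namely at positions $T+1,T+2$, and finally places the remaining zero boxes as a ``post-phase''. By choosing the distribution over $T$ and the interleaving appropriately, the stream of observed values is designed to be statistically almost symmetric with respect to $T$, so an identity-blind algorithm cannot reliably tell whether it is still in the pre-phase (where big and sure-$1$ are yet to come) or already in the post-phase (where only zeros remain). An order-aware algorithm that knows $T$ achieves $\E[\opt_\pi] = \varepsilon \cdot (1/\varepsilon) + (1-\varepsilon) \cdot 1 = 2-\varepsilon$ by waiting to position $T+1$, taking the big box if it realizes to $1/\varepsilon$, and otherwise taking the sure-$1$ at $T+2$.

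The heart of the proof is then to show that every identity-blind $\alg$ satisfies $\E[\alg] \le 1 + O(\varepsilon)$, which combined with the above yields a ratio of $\frac{1+O(\varepsilon)}{2-\varepsilon} \le \frac{1}{2} + O(\varepsilon)$. The intuition is a trade-off: any strategy that tries to capture the big box's $1/\varepsilon$ reward must pass over observed $1$s, but the sure-$1$ is indistinguishable from a Bernoulli $1$, so passing over $1$s risks missing the sure-$1$ and ending up with $0$; conversely, any strategy that eagerly accepts the first $1$ almost surely commits to a Bernoulli $1$ in the pre-phase before the big box ever arrives. I would formalize this via a coupling argument across realizations that share the same observation history up to the algorithm's stopping time $\tau$: on such coupled realizations the algorithm necessarily takes the same action, while the value of the chosen box differs dramatically between the two. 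Averaging over the random cut-point, this pairs ``takes big'' realizations with ``takes a Bernoulli $1$ instead of big/sure-$1$'' realizations, showing that the contributions from $v_\tau = 1/\varepsilon$ and $v_\tau = 1$ cannot both be close to $1$ simultaneously.

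The main obstacle I anticipate is the quantitative version of this coupling: designing the distribution over $T$ and the Bernoulli--zero interleaving so that the algorithm's Bayesian posterior on $T$ after any observation history is sufficiently diffuse, and converting the resulting qualitative indistinguishability into the tight bound $\E[\alg] \le 1 + O(\varepsilon)$ via a symmetry or potential argument. Once this is in place, the theorem follows by choosing the parameter $\varepsilon$ in our construction to be below the $\varepsilon$ of the theorem statement.
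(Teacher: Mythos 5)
Your construction is essentially the one used in the paper: a free-reward box ($1/\varepsilon$ w.p.\ $\varepsilon$), a deterministic value-$1$ box placed immediately after it, a large block of Bernoulli boxes, and dummy zeros, with the position of the free/sure pair randomized so that the observation stream cannot pinpoint it. You also correctly compute that the order-aware benchmark achieves $2-\varepsilon$. So the high-level plan is sound, and Yao's principle is the right framing. However, there are two substantive gaps.

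First, the ``trade-off'' intuition in your third paragraph is not the mechanism that makes the bound work, and a coupling argument built on it would be trying to prove the wrong thing. Once one restricts (WLOG) to algorithms that never accept a $0$, always accept $1/\varepsilon$ when observed, and never accept the last $1$ in a run, the algorithm's expected value is \emph{identically} $1 + \Pr[\alg \text{ accepts the sure-}1]$, regardless of which $1$s it passes on. This is because stopping at a Bernoulli $1$ contributes $1$, while reaching the free-reward box contributes $\varepsilon \cdot (1/\varepsilon) = 1$ in expectation, so there is no trade-off between the ``take a $1$'' branch and the ``wait for $1/\varepsilon$'' branch --- they both give exactly $1$, and the only upside is catching the deterministic box. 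Your plan to pair ``takes big'' realizations against ``takes a Bernoulli $1$ instead'' realizations and argue that the two contributions ``cannot both be close to $1$ simultaneously'' is therefore aimed at something that is already an identity; the real task is bounding a single probability, namely the chance of guessing the exact position of the sure-$1$ box.

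Second, that bounding is the entire technical content of the proof and you have left it as ``the main obstacle.'' It is not a small step: one must show that the optimal identity-blind algorithm's decision can be reduced (via several monotonicity and measurability arguments) to a threshold in $t$ depending only on the count of $1$s in the first $n$ positions, and then prove a Binomial anti-concentration bound --- the position of the sure-$1$ box is $2n+2-|\vect{X}|$ with $|\vect{X}| \sim \mathrm{Bin}(n,1/2)$, and even conditioned on the observed history the posterior over $|\vect{X}|$ is spread over $\Theta(\sqrt{n})$ values, so any single guess succeeds with probability $O(\log n / \sqrt{n})$. Note also that the resulting bound is $\E[\alg] \le 1 + o(1)$ where $o(1)$ is in the number of Bernoulli boxes $n$, \emph{not} in $\varepsilon$ as you wrote; both $\varepsilon \to 0$ and $n \to \infty$ are needed, with $n$ controlling the guessing difficulty. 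Without this piece the argument does not close, and the ``symmetry or potential argument'' you gesture at would still have to grapple with the algorithm's adaptivity to its full history, which is exactly what the paper's structural reductions (to decisions depending only on $|\vect{Y}_{\le n}|$, $Y_{t-1}$, $Y_t$) are designed to tame.
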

\begin{proof}
Consider an instance with $2n+2$ boxes of the following $4$ types:
\begin{enumerate}
    \item $F_1,\ldots,F_n$ are deterministic with a value $0$; 
    \item $F_{n+1},\ldots,F_{2n} $ are Bernoulli, whose value is $1$ with probability $\frac{1}{2}$ and $0$ otherwise;
    \item $F_{2n+1}$ is a ``free-reward'' box, whose value is $\frac{1}{\varepsilon}$ with probability $\varepsilon$ and $0$ otherwise;
    \item $F_{2n+2}$ is deterministic with a value $1$.
\end{enumerate}
We consider a family of arrival orders $\pi_{\vect{x}}$ that are parameterized by $\vect{x}\in \{0,1\}^n$.
In the following, we denote $|\vect{x}| \eqdef \sum_{i} x_i$.
\begin{itemize}
    \item For $i=1,\ldots,n$, $\pi_{\vect{x}}(i)$ is of Type~2~\footnote{We note that the specific matching among boxes of the same type does not change the analysis.} if $x_i=1$ and is of Type~1 if $x_i=0$;
    \item For $i=n+1,\ldots, 2n- |\vect{x}| $,  $\pi_{\vect{x}}(i)$ is of Type 2;
    \item  $\pi_{\vect{x}}(2n+1-|\vect{x}|) = 2n+1$ (i.e., Type 3);
    \item  $\pi_{\vect{x}}(2n+2-|\vect{x}|) = 2n+2$ (i.e., Type 4); 
    \item For $i=2n+3-|\vect{x}|,\ldots, 2n+2$,  $\pi_{\vect{x}}(i)$ is of Type 1.
\end{itemize}
Now, consider a random order $\pi_{\vect{X}}$ where $\vect{X}$ is distributed uniformly on $\{0,1\}^n$. We will show that any (deterministic or randomized) identity-blind algorithm has an expected accepted value of $1+ o(1)$, while for each $\pi_{\vect{x}}$ in the support of $\pi_{\vect{X}}$, the expected accepted value of the best order-aware algorithm is $2-o(1)$. 

We first study the performance of the best order-aware algorithm and prove the second statement. Given $\vect{X}=\vect{x}$ (which determines the arrival order), the algorithm that rejects the first $2n-|\vect{x}|$ boxes, then accepts Box $2n-|\vect{x}|+1$ if it is $\frac{1}{\varepsilon}$, otherwise accept Box $2n-|\vect{x}|+2$ (whose value is $1$), achieves an expected accepted value of $2-\varepsilon$.  

To show that identity-blind algorithms cannot do better than $1+o(1)$, we characterize the behavior of the best identity-blind algorithm.
\begin{claim} \label{cl:p4}
Among all identity-blind algorithms, there is one with a maximum expected value that satisfies the following properties:
\begin{enumerate}
    \item[(P1)] The algorithm always selects the $1/\varepsilon$ if observed.
    \item[(P2)] The algorithm always rejects values of $0$.
    \item[(P3)] The algorithm rejects all  values of $1$ up to time $n+1$.
    \item[(P4)] The algorithm rejects values of $1$ that come immediately after another values of $1$.
    \item[(P5)] The algorithm is deterministic.
\end{enumerate}
\end{claim}
\begin{proof}
By Yao's principle since the input is stochastic, there is a deterministic identity-blindness algorithm that maximizes the expected value (thus Property (P5) holds).
To show that there exists a deterministic algorithm that satisfies Properties (P1) and (P2), we observe that there is no larger value than $1/\varepsilon$ thus selecting it is always beneficial, and discarding $0$s is lossless.
For Properties (P3) and (P4), we observe that it cannot be the last value of $1$, therfore we can modify an algorithm to discard it, and accept the next value of $1$ while strictly increasing the expected value (since there is a chance that we will observe the $1/\varepsilon$ value).    
\end{proof}
Let $\alpha_t$ be the event that $\alg$ selected Box $t$, and let $\alpha_{<t}$ (respectively $\alpha_{>t}$) the event that $\alg$ selected a box before time $t$ (respectively, after time $t$). 
Let $Z$ be the event that the free reward is realized, and let $Z_t$ be the event that the value at time $t$ is $1/\varepsilon$. 
The value of the algorithm can be written as 
\begin{align}
\Exlong[\vect{X}]{\alg(\dists,\pi_{\vect{X}})} & = \Prlong[\vect{X}]{\alpha_{<2n+1-|\vect{X}|}} + \Prlong[\vect{X}]{\alpha_{2n+1-|\vect{X}|}} \cdot \frac{1}{\varepsilon} + \Prlong[\vect{X}]{\alpha_{>2n+1-|\vect{X}|}}  \nonumber \\ 
& = \Prlong[\vect{X}]{\alpha_{<2n+1-|\vect{X}|}} + \Prlong[\vect{X}]{Z \wedge \neg\alpha_{<2n+1-|\vect{X}|}} \cdot \frac{1}{\varepsilon} + \Prlong[\vect{X}]{\alpha_{2n+2-|\vect{X}|}} \nonumber \\
& = \Prlong[\vect{X}]{\alpha_{<2n+1-|\vect{X}|}} + \Prlong[\vect{X}]{Z} \cdot \Prlong[\vect{X}]{\wedge \neg\alpha_{<2n+1-|\vect{X}|}} \cdot \frac{1}{\varepsilon} + \Prlong[\vect{X}]{\alpha_{2n+2-|\vect{X}|}} \nonumber \\
& = 1 + \Prlong[\vect{X}]{ \alpha_{2n+2-|\vect{X}|}}, \label{eq:prob}
\end{align}
where the first equality holds since either $\alg$ selects a value of $1$ before the free reward box, or it selects the free reward box if it is realized, or it selects a box with a value of $1$ after the free reward box. The second equality holds since $\alg$ selects the free reward box if the box is realized and $\alg$ does not stop before, and since that the only box with a value of $1$ after the free reward box is Box $2n+2-|\vect{X}|$. The third equality holds since the events $Z$ and $\neg\alpha_{<2n+1-|\vect{X}|}$ are independent. The last equality holds since $\Prx{Z} = \varepsilon$.

Our goal is then to show that $\Prx{\alpha_{2n+2-|\vect{X}|}} = o(1)$.
Notice that
\[
\Prx{\alpha_{2n+2-|\vect{X}|}} = \Prx{\alpha_{2n+2-|\vect{X}|} \wedge Z} + \Prx{\alpha_{2n+2-|\vect{X}|} \wedge \neg Z} \le \varepsilon + \Prx{\alpha_{2n+2-|\vect{X}|} \wedge \neg Z}~.
\]
Hereafter, we study the best algorithm that maximizes $\Pr[\alpha_{2n+2-|\vect{X}|} \wedge \neg Z]$.

The rest of the proof is mostly technical. We give a high-level plan before we delve into the details of the analysis. Observe that the game is essentially to guess $|\vect{X}|$ exactly, which is distributed as $\text{Bin}(n,1/2)$. 
Suppose that we need to make the guess right after seeing the first $n$ boxes, then we should only count the number of 1's that we have seen. However, for most values of the number of 1's (which is equivalent to having an additional sample of $\text{Bin}(|\vect{X}|,1/2)$), the random variable $|\vect{X}|$ is (approximately) equally likely to be distributed between $\frac{1}{2} n \pm \Theta(\sqrt{n})$, which makes guessing exactly $|\vect{X}|$ to be difficult. To generalize the argument for guessing at time $t>n$, we need to deal with the extra observed values from time $n+1$ to $t$. Indeed, we argue that such information is not too helpful. 

Let $\vect{Y}$ be the observed value sequence of the instance, where $Y_t$ is the observed value at time $t$, and $\vect{Y}_{\leq t}$ be the observed value sequence up to time $t$.  We show that 
the best algorithm's decision at time $t$ depends only on $|\vect{Y}_{\leq n}|$, $Y_t$, and $Y_{t-1}$.

\begin{claim}\label{cl:generic}
There exists an identity-blind algorithm $\alg$ with maximum expected utility that satisfies Properties (P1-P5) that also satisfies the following property:
\begin{enumerate}
    \item[(P6)] For $t>n+1$, given that $Y_t=1$, and $Y_{t-1}=0$ the decision of $\alg$ whether to accept $t$ depends only on the value of $| Y_{\leq n}|$.
\end{enumerate} 
\end{claim}
\begin{proof}
Let $\alg'$ be an algorithm that satisfies Properties P(1-5) with maximum expected value.  
Now consider an algorithm $\alg$ that always selects $1/\varepsilon$, and at time $t>n+1$, it selects $t$, if $Y_t=1 \wedge Y_{t-1}=0$, and $|\vect{Y}_{\leq n}|=\delta$, with probability\footnote{Where $\frac{0}{0}$ can be interpreted arbitrarily since it means that this state is not reachable.} 
\[
\frac{ \Prx{|\vect{Y}_{\leq n} |=\delta \wedge Y_t=1 \wedge Y_{t-1}=0 \wedge \alpha'_t}}{\Prx{|\vect{Y}_{\leq n} |=\delta \wedge Y_t=1 \wedge Y_{t-1} =0\wedge \neg \alpha'_{<t}}}
,\]
where $\alpha'_t$ and $\alpha'_{<t}$  are the events that $\alg'_t$  selected $t$ or before time $t$, respectively.  
Then, by design, for every $\delta \in \{0,\ldots,n\}$, the probabilities that after observing $\vect{Y}_{\leq n}$ for which $|\vect{Y}_{\leq n}|=\delta $ that $\alg$ and $\alg'$ arrive to time $t$ are the same, and also the probabilities that $\alg$ and $\alg'$  select $t$, given $Y_{t}=1 \wedge Y_{t-1}=0$, and that $1/\varepsilon$ was not observed.

Since the randomness of $\alg$ can be drawn at the beginning (before observing the input), then it is a distribution over deterministic algorithms that each satisfies the Properties (P1-6), therefore, there exists a deterministic algorithm that satisfies the claim.
\end{proof}

The next claim states that the probability of selecting the deterministic box is monotone in $t$.
\begin{claim}\label{cl:calc}
    For every $\delta\in\{0,\ldots,n\}$, and any $t \in \{n+2,\ldots,2n+1-\delta \}$, it holds that 
    \begin{eqnarray*}
    & & \Prx{t=2n+2-|\vect{X}| \mid |\vect{Y}_{\leq n}| =\delta \wedge Y_t=1 \wedge Y_{t-1} =0 \wedge \neg Z_{\leq t}} \\ & \leq &\Prx{t+1=2n+2-|\vect{X}| \mid |\vect{Y}_{\leq n}|= \delta \wedge Y_{t+1}=1 \wedge Y_{t} =0 \wedge \neg Z_{\leq t+1}}
    \end{eqnarray*}
\end{claim}
\begin{proof}
We start by rewriting the conditional probability.
\begin{eqnarray}\label{eq:mono}
    & & \Prx{t=2n+2-|\vect{X}| \mid |\vect{Y}_{\leq n}| =\delta \wedge Y_t=1 \wedge Y_{t-1} =0 \wedge \neg Z_{\leq t} }  \nonumber \\& = & \frac{\Prx{t=2n+2-|\vect{X}| \wedge |\vect{Y}_{\leq n}| =\delta \wedge Y_t=1 \wedge Y_{t-1} =0 \wedge \neg Z_{\leq t} }}{\Prx{ |\vect{Y}_{\leq n}| =\delta \wedge Y_t=1 \wedge Y_{t-1} =0 \wedge \neg Z_{\leq t} }}
\end{eqnarray}
For the numerator, we have that
\begin{eqnarray}\label{eq:2}
& &     \Prx{t=2n+2-|\vect{X}| \wedge |\vect{Y}_{\leq n}| =\delta \wedge Y_t=1 \wedge Y_{t-1} =0 \wedge \neg Z_{\leq t} } \nonumber
\\     & = & \Prx{\text{Bin}(n,1/2) = 2n+2-t} \cdot \Prx{\text{Bin}(2n+2-t,1/2) =\delta} \cdot (1-\varepsilon) \nonumber
\\     & = & \Prx{\text{Bin}(n,1/4) = \delta} \cdot \Prx{\text{Bin}(n-\delta,1/3) =2n+2-t-\delta} \cdot (1-\varepsilon),
\end{eqnarray}
    where the second equality holds by first drawing $|\vect{Y}_{\leq n }|$ and then drawing $|\vect{X}|$ instead of first drawing $|\vect{X}|$ and then $|\vect{Y}_{\leq n }|$.
Next, we study the denominator 
\begin{eqnarray}\label{eq:3}
    & & \Prx{ |\vect{Y}_{\leq n}| =\delta \wedge Y_t=1 \wedge Y_{t-1} =0 \wedge \neg Z_{\leq t} } \nonumber \\
    &= & \Prx{\text{Bin}(n,1/2) = 2n+2-t} \cdot \Prx{\text{Bin}(2n+2-t,1/2) =\delta} \cdot (1-\varepsilon) 
    \nonumber \\ 
    &+ & \sum_{\gamma=\delta}^{2n-t} \Prx{\text{Bin}(n,1/2) = \gamma} \cdot \Prx{\text{Bin}(\gamma,1/2) =\delta} \cdot \frac{1}{4}  \nonumber \\
    & = & \Prx{\text{Bin}(n,1/4) = \delta} \cdot \Prx{\text{Bin}(n-\delta,1/3) =2n+2-t-\delta} \cdot (1-\varepsilon)\nonumber\\
    &+ & \sum_{\gamma=\delta}^{2n-t} \Prx{\text{Bin}(n,1/4) = \delta} \cdot \Prx{\text{Bin}(n-\delta ,1/3) =\gamma-\delta} \cdot \frac{1}{4} 
\end{eqnarray}
Notice that for every $a,b,c,d \in \R_{\geq 0}$, $\frac{a}{a+b}  \leq \frac{c}{c+d}$ if and only if $\frac{a}{b} \leq \frac{c}{d}$. By substituting $a,b$ by equations~\eqref{eq:2} and \eqref{eq:3} respectively, and substituting $c,d$ by the corresponding equations for $t+1$, the claim follows since 
\[
4\cdot \frac{\Prx{\text{Bin}(n-\delta,1/3) =2n+2-t-\delta}  \cdot (1-\varepsilon)}{\sum_{\gamma=\delta}^{2n-t} \Prx{\text{Bin}(n-\delta ,1/3) =\gamma-\delta}}
\]
 is proportional to 
 $$ \frac{2^{t -n -2} {n-\delta \choose 2n+2-t-\delta } }{ \sum_{\gamma =\delta}^{2n-t} 2^{\delta-\gamma} {n-\delta \choose \gamma-\delta }} ,$$
 which by renaming $n' =n-\delta$, $t'=t-n-2$ and $\gamma'=\gamma-\delta$ is 
 $$ \frac{2^{t'} {n'\choose t'} }{ \sum_{\gamma' =0}^{n' -t' -2} 2^{-\gamma'} {n' \choose \gamma' }} $$
 which is weakly\footnote{Here we interpret $c/0$ as infinity.} monotone increasing in $t'$ for every $n'$. 
\end{proof}

We claim that the optimal algorithm is deterministic and monotone in the sense that: 
\begin{claim}\label{cl:monotone}
There exists a mapping $g:\{0,\ldots,n\} \rightarrow \{n+2,\ldots 2n+2\} $, 
such that there exists an identity-blind algorithm $\alg$ with maximum expected utility that satisfies Properties (P1-6), and selects Box $t>n+1 $ with $Y_t=1, Y_{t-1}=0$ if and only if $t\geq g(|\vect{Y}_{\leq n}|)$. 
\end{claim}
\begin{proof}
Let $\delta=|\vect{Y}_{\leq n}|$. By Claim \ref{cl:generic}, we can assume that there is an algorithm $\alg'$ that satisfies Properties (P1-6) with a maximum expected utility.
Assume that $\alg'$  after observing $\vect{Y}_{\leq t}$ for $n+1<t<2n+2$ for which  $|\vect{Y}_{\leq n
}| = \delta \wedge Y_{t}=1 \wedge Y_{t-1} =0$, it selects it, but after observing $\vect{Y}_{\leq t+1}$  for which  $|\vect{Y}_{\leq n
}| = \delta \wedge Y_{t+1}=1 \wedge Y_{t} =0$, it does not select  it.
Let $\beta_{t',\delta}$ be the probability that $\alg'$ reaches time $t'$ for which $|\vect{Y}_{\leq n}| = \delta \wedge Y_{t'}=1 \wedge Y_{t'-1} =0$.

If $\beta_{t,\delta} > \beta_{t+1,\delta}$, then let algorithm $\alg$, that selects with probability $1-\frac{\beta_{t+1,\delta}}{\beta_{t,\delta}}$ if $Y_t=1 \wedge Y_{t-1}=0 \wedge |\vect{Y}_{\leq n}|=\delta$, and with probability $1$  if $Y_{t+1}=1 \wedge Y_{t}=0 \wedge |\vect{Y}_{\leq n}|=\delta$, and when observing $Y_{t+2}=Y_t=1 \wedge Y_{t+1}=Y_{t-1}=0  \wedge |\vect{Y}_{\leq n}|=\delta$, it does not select (in all other cases, it does the same as $\alg'$). 
It is easy to observe, that for each $t'>t+2$, the probabilities that $\alg$ and $\alg'$ reaches time $t'$, where $Y_{t'}=1 \wedge Y_{t'-1}=0 \wedge |\vect{Y}_{\leq n}|=\delta$ are the same, and for $t'=t+2$, the probability that  $\alg'$ reaches a state where $Y_{t'}=1 \wedge Y_{t'-1}=0 \wedge |\vect{Y}_{\leq n}|=\delta$ is the same as the probability that $\alg$ reaches a state where where $Y_{t'}=1 \wedge Y_{t'-1}=0 \wedge |\vect{Y}_{\leq n}|=\delta$, and either $Y_{t'-2}=0$ or $Y_{t'-3}=1$. Thus, by Claim~\ref{cl:calc}, $\alg$ has a better performance than $\alg'$.

Else ($\beta_{t,\delta} \le \beta_{t+1,\delta}$), then the same arguments work for the algorithm  $\alg$ that selects with probability $0$ if $Y_t=1 \wedge Y_{t-1}=0 \wedge |\vect{Y}_{\leq n}|=\delta$, and with probability $1-\frac{\beta_{t,\delta}}{\beta_{t+1,\delta}}$  if $Y_{t+1}=1 \wedge Y_{t}=0 \wedge |\vect{Y}_{\leq n}|=\delta$, and when observing $Y_{t+2}=Y_t=1 \wedge Y_{t+1}=Y_{t-1}=0  \wedge |\vect{Y}_{\leq n}|=\delta$, it does not select (in all other cases, it does the same as $\alg'$).

Since $\alg$ is a distribution over deterministic algorithms, there exists a deterministic algorithm in its support with at least as high expected probability of choosing Box $t=2n+2-|X|$. Note that the deterministic algorithms in the support, either reject both of them, or accept both of them, or reject $t$ and accept $t+1$.
Applying this step repeatedly (in increasing order) converges to an algorithm that satisfies the assumptions of the claim.
\end{proof}

We next bound the expected values of all algorithms satisfying Properties (P1-6) and are defined by a mapping $g$ (as in Claim~\ref{cl:gy}), for a range of realizations of $Y_{\leq n}$.
\begin{claim}\label{cl:gy}
For every $\delta$ for which $n/6 < \delta< n/3$ and any value $g(\delta)$, if $|\vect{Y}_{\leq n}|=\delta $, then for the deterministic algorithm $\alg$ that selects if $1/\varepsilon$ is observed or if for Box $t > n+1 $ only  or if $Y_t=1$, $Y_{t-1}=0$, and $t\geq g(\delta)$ the probability of selecting Box $2n+2-|X|$  is $O\left(\frac{ \log n}{\sqrt{n}}\right)$. 
\end{claim}
\begin{proof}
It holds that 
\begin{eqnarray}\label{eq:prob3}
& &  \Prx{\alg \mbox{ selects Box } 2n+2-|X| \mid |\vect{Y}_{\leq n}| =\delta} \nonumber \\ & = & \sum_{\gamma = g(y)}^{2n+2}  \Prx{\alg \mbox{ selects } \gamma\wedge \gamma = 2n+2-|X| \wedge |\vect{Y}_{\leq n}| =\delta } / \Prx{|\vect{Y}_{\leq n}| =\delta},
\end{eqnarray}
where 
\begin{align} 
& \sum_{\gamma = g(y)}^{2n+2}  \Prx{\alg \mbox{ selects } \gamma\wedge \gamma = 2n+2-|X| \wedge |\vect{Y}_{\leq n}| =\delta } \nonumber \\ 
& = \sum_{\gamma = g(y)}^{2n+2}  \Prx{\text{Bin}(n,1/2) = 2n+2-\gamma} \cdot \Prx{\text{Bin}(2n+2-\gamma,1/2) = \delta} \cdot (1-\varepsilon) \cdot  \frac{\gamma-g(\delta)+1}{2^{\gamma-g(\delta)}} \nonumber \\ 
& = \sum_{\gamma = g(y)}^{2n+2}  \Prx{\text{Bin}(n,1/4) = \delta} \cdot \Prx{\text{Bin}(n-\delta,1/3) = 2n+2-\gamma-\delta} \cdot (1-\varepsilon) \cdot  \frac{\gamma-g(\delta)+1}{2^{\gamma-g(\delta)}}, \label{eq:assign}
\end{align}
where the first equality holds by the law of total probability, and the second equality holds since the probability of $Y_i=1$ is $1/4$, and the probability that $X_i=1$ given that $Y_i=0$ is $1/3$.
Combining Equations~\eqref{eq:prob3} and \eqref{eq:assign} with that $\Prx{|\vect{Y}_{\leq n}| =\delta} =\Prx{\text{Bin}(n,1/4)=\delta}$, we get that:

\begin{multline}\label{eq:prob2}
\Prx{\alg \mbox{ selects Box } 2n+2-|X| \mid |\vect{Y}_{\leq n}| =\delta}  \\ 
= \sum_{\gamma = g(y)}^{2n+2}  \Prx{\text{Bin}(n-\delta,1/3) = 2n+2-\gamma-\delta} \cdot (1-\varepsilon) \cdot  \frac{\gamma-g(\delta)+1}{2^{\gamma-g(\delta)}} \leq  O\left(\frac{ \log n}{\sqrt{n}}\right),
\end{multline}
where the inequality holds since that for the range of $\delta$, $\Prx{\text{Bin}(n-\delta,1/3) = 2n+2-\gamma-\delta} = O\left(\frac{1}{\sqrt{n}}\right)$, and for $\gamma>g(\delta) + 3\log n$, $\frac{\gamma-g(\delta)+1}{2^{\gamma-g(\delta)}} \leq O(\frac{1}{n^2})$.
\end{proof} 

Finally, we conclude the proof of the theorem by combining Claims~
\ref{cl:monotone}, and \ref{cl:gy} with Equation~\eqref{eq:prob} and the fact that $\Prx{n/6<|\vect{Y}_{\leq n}|<n/3}=\Prx{n/6<\text{Bin}(n,1/4)<n/3}=1 -o(1)$.
\end{proof}
 
\section{Max Probability Objective}
For simplicity of presentation, in this section, we assume that the distributions of the maximum value is continuous, rather than not having point masses of more than $\varepsilon$. All of our calculations can be adjusted for the case where  the distributions of the maximum value does not have point masses of more than $\varepsilon$, by losing an error factor of $f(\varepsilon)$, where the error function $f:\R_{>0} \rightarrow \R_{>0}$ approaches $0$ (i.e., $\lim_{\varepsilon\rightarrow 0^+} f(\varepsilon) =0$).
We provide the optimal deterministic single-threshold algorithm with identity-blindness gap of $\Gamma^* \eqdef \frac{\lambda \cdot \ln( 1/\lambda)}{\lambda + 1/e} \approx 0.562$, where $\lambda \approx 0.245$ is the solution to the following equation.
\begin{equation}
\label{eq:prob_ratio}
\min_{\rho \ge \lambda} \frac{\lambda \cdot \ln\left( \frac{1}{\rho} \right) + \rho - \lambda}{\lambda + 1 - \frac{\lambda}{\rho}} = \frac{\lambda \cdot \ln\left( \frac{1}{\lambda} \right)}{\lambda + \frac{1}{e}}
\end{equation}

\paragraph{An identity-blind single-threshold algorithm.} 
Let $\alg$ be the single-threshold algorithm that accepts the first box whose value is at least $\tau$, where $\tau$ satisfies
\[
\Prx{\max_{i \in [n]} v_i < \tau} = \lambda~.
\]

\begin{theorem}
\label{thm:max_prob_single}
    The above single-threshold algorithm has an identity-blindness gap of $\Gamma^* \approx 0.562$.
\end{theorem}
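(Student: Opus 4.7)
\noindent\emph{Proof plan.} The plan is to lower bound $\alg$'s probability of picking the maximum and to upper bound that of $\opt_\pi$ on an arbitrary instance, and then take the ratio; the main obstacle will be establishing a tight upper bound on $\opt_\pi$, which is a fully adaptive, order-aware algorithm that may exploit subtle dependencies in the revealed values. For a fixed arrival order $\pi$, introduce $r_t = \Prx{\max_{s<t} v_{\pi(s)} < \tau}$, the probability that $\alg$ survives to step $t$; so $r_1 = 1$ and $r_{n+1} = \lambda$. Because $\alg$ picks the maximum exactly when the first value exceeding $\tau$ coincides with the global max, independence of the $v_i$'s gives
\[
\alg(\dists,\pi) \;=\; \sum_{t=1}^n r_t \cdot \Prx{v_{\pi(t)} \geq \tau \,\wedge\, v_{\pi(t)} = \max\nolimits_{s \geq t} v_{\pi(s)}}.
\]

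Guided by the worst-case Bernoulli construction hinted at in the paper's overview, I would parameterize each instance by a single scalar $\rho \in [\lambda,1]$ that marks a transition in the sequence $r_t$: an \emph{early phase} consisting of positions with $r_t > \rho$ and a \emph{late phase} with $r_t \leq \rho$. First reducing to Bernoulli instances (each box taking value $0$ or some value slightly above $\tau$), the worst-case instance against a fixed threshold should decompose into three stages as promised: a run of tiny Bernoullis pushing $r$ from $1$ down to $\rho$, a single ``critical'' box carrying most of the $1 - \lambda/\rho$ mass of the above-$\tau$ event, and a residual run. On such canonical instances a direct calculation yields
\[
\alg(\dists,\pi) \;\geq\; \lambda\ln(1/\rho) + \rho - \lambda \qquad\text{and}\qquad \opt_\pi(\dists,\pi) \;\leq\; \lambda + 1 - \lambda/\rho.
\]
The $\lambda\ln(1/\rho)$ term is the continuous-limit contribution of the early Bernoullis to the sum above (using that among $K$ i.i.d.\ realized values the first is the max with probability $1/K$, and that $\Prx{\text{realization at step } t \mid \text{reach } t} \approx -d\ln r_t$), while $\rho - \lambda$ accounts for the event that no early value exceeds $\tau$ yet the late phase does. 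The bound on $\opt_\pi$ captures that an informed $\opt_\pi$ wins only when either the late phase supplies a value above $\tau$ (probability $1 - \lambda/\rho$) or no value exceeds $\tau$ at all (probability $\lambda$).

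The hardest step will be lifting the Bernoulli analysis to general distributions with continuous tails above $\tau$, which I plan to handle via the ``multiple variants of the game'' alluded to in the paper: compare $\opt_\pi$ through a sequence of information-augmented auxiliary algorithms to a fictitious one that admits the clean $\lambda + 1 - \lambda/\rho$ bound, and replace each $F_{\pi(t)}$ above $\tau$ by an equivalent Bernoulli conditioned only on $\{v_{\pi(t)} \geq \tau\}$ for the $\alg$ side, being careful that the assumption of no large point masses on $\max_i v_i$ ensures the event of a unique maximum holds with probability arbitrarily close to $1$. Combining the two bounds gives
\[
\frac{\alg(\dists,\pi)}{\opt_\pi(\dists,\pi)} \;\geq\; \frac{\lambda\ln(1/\rho) + \rho - \lambda}{\lambda + 1 - \lambda/\rho}
\]
for every admissible $\rho \geq \lambda$, and minimizing the right-hand side using equation~\eqref{eq:prob_ratio} delivers the claimed identity-blindness gap $\Gamma^*$.
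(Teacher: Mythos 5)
Your outline has the right \emph{shape} --- a parameter $\rho$ with $\alg$ lower-bounded via a $\lambda\ln(1/\rho)$ term and $\opt$ upper-bounded by $\lambda + 1 - \lambda/\rho$, then a minimization over $\rho$ that matches Equation~\eqref{eq:prob_ratio} --- but the key technical step is missing and, as stated, one of your two bounds is false.

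Concretely, the claim $\alg \geq \lambda\ln(1/\rho) + \rho - \lambda$ cannot hold as a free-standing lower bound on general instances once $\rho$ is large. As $\rho \to 1$ it would assert that $\alg$ wins whenever any value clears $\tau$, which is plainly wrong for the fixed-threshold rule. What the paper actually proves is that there is a \emph{shared} quantity (the probability mass on which the optimal algorithm accepts a first above-$\tau$ box and that box survives to the end, call it $\eqref{eq:case2}$, with $\eqref{eq:case2} \le \rho - \lambda$) such that $\alg \ge \lambda\ln(1/\rho) + \eqref{eq:case2}$ and $\opt \le \lambda + \eqref{eq:case2}/\rho$. The ratio $\frac{\lambda\ln(1/\rho) + x}{\lambda + x/\rho}$ is monotone in $x$ with direction governed by $\mathrm{sign}(\rho - \ln(1/\rho))$; only when $\ln(1/\rho) \ge \rho$ can you plug in the endpoint $x = \rho - \lambda$ to recover your claimed expression, and when $\ln(1/\rho) < \rho$ a completely different argument is needed (the paper falls back to the \citet{esfandiariHLM20} bound $\alg \ge \lambda\ln(1/\lambda)$ and uses $x\ln(1/x) \le 1/e$). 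Your proposal silently assumes you are always in the first regime, which is exactly the case analysis it must not skip.

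Two further gaps. First, $\rho$ is not a free ``transition scalar'': in the paper it is pinned down as $\rho = \prod_{k<i}(1-p_k)\bigl(1 - \sum_j \Pr[R_j]\bigr)$, i.e.\ the probability that the single-threshold rule agrees with the optimal algorithm's decisions from position $i$ onward, where $i$ is the starting index maximizing a carefully defined subgame value $\opt_j(\tau)$. Without that definition, the inequalities you want do not follow, and in particular your upper bound on $\opt$ requires first establishing the recursion $\opt_i(\tau) \le \sum_j \Pr[A_j]\Pr[\max_{k>j} v_k \le v_j \mid A_j] + \bigl(\sum_j \Pr[R_j]\bigr)\opt_i(\tau)$, which needs the auxiliary game $\opt_j(\theta)$ (start at box $j$, beat $\theta$) and the monotonicity $\opt_{j+1}(v_j) \le \opt_i(\tau)$. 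Second, the ``reduce to Bernoulli, then lift'' plan is unsubstantiated: lower-bounding $\alg$ requires the bound to hold on every instance, not merely on a canonical worst case, and the paper never actually performs a reduction to Bernoulli distributions --- it works directly with the $T_j, A_j, R_j$ event decomposition for arbitrary distributions (the Bernoulli family only appears later, in the matching lower-bound construction). The lifting step is the core technical content and would need to be spelled out; alluding to ``multiple variants of the game'' without defining them does not bridge the gap.
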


\begin{proof}
Consider the following generalization of the game with two extra components: 1) let there be an extra number $\theta$ given in advance, and 2) the game starts from box $j$. Then, the objective is to maximize the probability of catching the box among $\{j,j+1,\ldots,n\}$ with the largest value that exceeds $\theta$.
Note that the original game corresponds to $\theta=0$ and $j=1$.
We use $\opt_j(\theta)$ to denote the optimal winning probability for this generalization of the game. 

Moreover, we bound the value of $\opt_1(0)$ by the two cases: (1) when the maximum value is less than $\tau$, and (2) the maximum value is at least $\tau$. 
The first term is bounded by $\lambda$, since $\lambda$ is the total probability of the stated event.
The second term is bounded by $\opt_1(\tau)$, according to the definition of $\opt_1(\tau)$. Consequently, it is no larger than $\max_j \opt_j(\tau)$. To sum up, we have that 
\[
\opt = \opt_1(0) \le \Prx{\max_{k} v_k < \tau} + \opt_1(\tau) \le \lambda + \max_{j} \opt_j(\tau) ~.
\]

Let $i = \arg\max_{j} \opt_j(\tau)$ and consider the behavior of the optimal algorithm when the game starts at box $i$ and threshold $\theta=\tau$.
Consider the following events:
\begin{itemize}
\item $T_j$: $j$ is the first box among $\{i,i+1,\cdots, n\}$ whose value passes the threshold $\tau$. I.e., $\max_{i\le k < j} v_k < \tau$ and $v_j \ge \tau$.
\item $A_j$: $T_j$ happens and the optimal algorithm accepts box $j$.
\item $R_j$: $T_j$ happens but the optimal algorithm rejects box $j$.
\end{itemize}

For the ease of our analysis, let $p_i \eqdef \Prx{v_i \ge \tau}$. Then we have that $\prod_{i \in [n]} (1-p_i) = \lambda$ and $\Prx{T_j} = \prod_{k=i}^{j-1}(1-p_k) \cdot p_j$.
Furthermore, let $a_j \eqdef \Prx{A_j \vert T_j}$ and $r_j \eqdef \Prx{R_j \vert T_j}$.

We start with the performance of the optimal algorithm. Without loss of generality, we safely assume that the optimal algorithm only accepts a box if its value is at least $\tau$. Therefore, we split the probability space into $\cup_{j=i}^{n} T_j$ and write the performance of the optimal algorithm according to its decision at step $j$ (i.e., $T_j = A_j \cup R_j$).
\begin{align*}
    \opt_i(\tau) & = \sum_{j=i}^{n} \left( \Prx{A_j} \cdot \Prx{ \left. \max_{k>j} v_k \le v_j \right\vert A_j} + \Prx{R_j} \cdot \Ex{ \left. \opt_{j+1}(v_j) \right\vert R_j} \right) \\
    & \le \sum_{j=i}^{n} \left( \Prx{A_j} \cdot \Prx{ \left. \max_{k>j} v_k \le v_j \right\vert A_j} + \Prx{R_j} \cdot \opt_i(\tau) \right) \\
    & = \sum_{j=i}^{n} \left( \Prx{A_j} \cdot \Prx{ \left. \max_{k>j} v_k \le v_j \right\vert A_j} \right) + \left( \sum_{j=i}^{n} \Prx{ R_j} \right) \cdot \opt_{i}(\tau), \tag{$\star$} \label{eq:opt}
\end{align*}
where the inequality holds since when $R_j$ happens, $\opt_{j+1}(v_j) \le \opt_{j+1}(\tau)$. Moreover, $ \opt_{j+1} \le \opt_i(\tau)$ according to the definition of $i$.

Next, we consider the behavior of the single-threshold algorithm. 1) If it stops at box $j$ with $j<i$, it wins if this is the only box whose value exceeds $\tau$; 2) Otherwise, if all boxes before $i$ have value smaller than $\tau$, we compare the behavior of the single-threshold algorithm with $\opt_i(\tau)$ afterwards. Suppose box $j\ge i$ is the first box whose value $v_j \ge \tau$. a) If $\opt_i(\tau)$ accepts (i.e., $A_j$), the two algorithms would have the same winning probability; b) Otherwise $\opt_i(\tau)$ rejects (i.e. $R_j$), the single threshold algorithm still wins if all future boxes have value smaller than $\tau$.
To sum up, we have the following:
\begin{align}
    \alg & \ge \sum_{j=1}^{i-1} p_j \cdot \prod_{k\ne j} (1-p_k) \tag{i} \label{eq:case1}\\
    & + \prod_{k<i} (1-p_k) \cdot \sum_{j=i}^{n} \left( \Prx{A_j} \cdot \Prx{ \left. \max_{k>j} v_k \le v_j \right\vert A_j} \right) \tag{ii} \label{eq:case2} \\
    & + \prod_{k < i} (1-p_k) \cdot \sum_{j=i}^{n} \left( \Prx{R_j} \cdot \prod_{k>j} (1-p_k) \right)  \tag{iii} \label{eq:case3}
\end{align}

Consider the following parameter that plays a central role in our analysis. Intuitively, it is the probability that the single-threshold algorithm behaves the same as the optimal algorithm.
\[
\rho \eqdef \prod_{k<i} (1-p_k) \cdot \left( 1 - \sum_{j=i}^{n} \Prx{ R_j} \right)~.
\] 
Next, we provide upper and lower bounds of \eqref{eq:case1}, \eqref{eq:case2}, and \eqref{eq:case3}.
\begin{lemma}
\label{lem:prob_alg}
We have the following:
\begin{enumerate}
\item $\alg \ge \lambda \cdot \ln \left( \frac{1}{\lambda} \right)$~;
\item $\eqref{eq:case1} + \eqref{eq:case3} \ge \lambda \cdot \ln \left( \frac{1}{\rho}\right)$~;
\item $\eqref{eq:case2} \le \rho - \lambda$~.
\end{enumerate}
\end{lemma}
\begin{proof}
We start with the first statement of the lemma, that is proved by \cite{esfandiariHLM20}. For completeness, we include a proof here.
\begin{multline*}
\alg \ge \sum_{j=1}^{n} \Prx{j \text{ is the unique box with value at least } \tau} 
= \sum_{j=1}^{n} \left(p_j \cdot \prod_{k \ne j}(1-p_k) \right) \\
= \prod_{k=1}^{n}(1-p_k) \cdot \sum_{j=1}^{n} \frac{p_j}{1-p_j} \ge \prod_{k=1}^{n}(1-p_k) \cdot \sum_{j=1}^{n} \ln \left(\frac{1}{1-p_j}\right) = \lambda \cdot \ln \left( \frac{1}{\lambda} \right)~,
\end{multline*}
where the second inequality follows from the fact that $\frac{x}{1-x} \ge \ln\left( \frac{1}{1-x} \right)$ for $x \in [0,1)$.

For the second statement, notice that $\Prx{R_j} = \Prx{T_j} \cdot \Prx{R_j |T_j} = \prod_{k=i}^{j-1}(1-p_k)) \cdot p_j r_j$. We rewrite the left hand side as the following.
\[
\eqref{eq:case1} + \eqref{eq:case3} = \prod_{k=1}^{n}(1-p_k) \cdot \left( \sum_{j=1}^{i-1} \frac{p_j}{1-p_j} + \sum_{j=i}^{n} \frac{p_j r_j}{1-p_j} \right)
\]
Consider the function $F: [0,1]^{n-i+1} \to [0,1]$:
\[
F(r_i,r_{i+1},...,r_{n}) \eqdef \sum_{j=i}^{n} \frac{p_j r_j}{1-p_j} - \ln \left( \frac{1}{1-\sum_{j=i}^{n}\prod_{k=i}^{j-1}(1-p_k) \cdot p_j r_j} \right)~.
\]
Notice that 
\begin{multline*}
\frac{\partial F}{\partial r_n} = \frac{p_n}{1-p_n} - \frac{\prod_{k=i}^{n-1}(1-p_k) \cdot p_n}{1-\sum_{j=i}^{n}\prod_{k=i}^{j-1}(1-p_k)\cdot p_j r_j} \\
\ge \frac{p_n}{1-p_n} - \frac{\prod_{k=i}^{n-1}(1-p_k) \cdot p_n}{1-\sum_{j=i}^{n}\prod_{k=i}^{j-1}(1-p_k)\cdot p_j} = \frac{p_n}{1-p_n} - \frac{\prod_{k=i}^{n-1}(1-p_k) \cdot p_n}{\prod_{k=i}^{n}(1-p_k)} = 0,
\end{multline*}
where we use the fact that $r_j \le 1$ for every $i\le j \le n$.
Consequently, $F(r_i,...,r_n) \ge F(r_i,...,r_{n-1},0)$. Similarly, one can prove that $F(r_i,...,r_k,0,...,0) \ge F(r_i,...,r_{k-1},0,0,...,0)$ for every $i \le k \le n$, and therefore, $F(r_i,...,r_n) \ge F(0,...,0) = 0$.
Together with the fact that $\frac{x}{1-x} \ge \ln\left(\frac{1}{1-x}\right)$ for all $x \in [0,1)$, we conclude the proof of the second statement:
\begin{multline*}
\eqref{eq:case1} + \eqref{eq:case3} \ge \prod_{k=1}^{n}(1-p_k) \cdot \left( \sum_{j=1}^{i-1} \ln \left(\frac{1}{1-p_j}\right) + \ln \left( \frac{1}{1-\sum_{j=i}^{n}\prod_{k=i}^{j-1}(1-p_k) \cdot p_j r_j} \right) \right)  \\
= \lambda \cdot \ln \left( \frac{1}{\prod_{j<i}(1-p_j) \cdot \left(1-\sum_{j=i}^{n}\prod_{k=i}^{j-1}(1-p_k) \cdot p_j r_j \right)}\right) = \lambda \cdot \ln \left( \frac{1}{\rho} \right)~.
\end{multline*}

Finally, for the third statement, we have
\begin{multline*}
\eqref{eq:case2} \le \prod_{k<i} (1-p_k) \cdot \sum_{j=i}^{n} \Prx{A_j}
= \prod_{k<i} (1-p_k) \cdot \left( \sum_{j=i}^{n} \Prx{T_j} - \sum_{j=i}^{n} \Prx{R_j} \right) \\
= \prod_{k<i} (1-p_k) \cdot \left( 1- \prod_{k=i}^{n}(1-p_k) - \sum_{j=i}^{n} \Prx{R_j}\right) = \rho - \lambda
\end{multline*}
\end{proof}

By rearranging \eqref{eq:opt} and by the definition of $\rho$, we have that
\[
\opt_i(\tau) \le \frac{\eqref{eq:case2}}{\rho} \le \frac{\alg - \eqref{eq:case1} - \eqref{eq:case3}}{\rho} \le \frac{\alg - \lambda \cdot \ln \left( \frac{1}{\rho}\right)}{\rho}~.
\]

\noindent \textbf{Case 1:} if $\ln \left( \frac{1}{\rho} \right) \ge \rho$, we have
    \begin{align*}
    \frac{\alg}{\opt} \ge \frac{\eqref{eq:case1} + \eqref{eq:case2} + \eqref{eq:case3}}{\lambda + \frac{\eqref{eq:case2}}{\rho}} \ge \frac{\lambda \cdot \ln\left( \frac{1}{\rho} \right) + \eqref{eq:case2}}{\lambda + \frac{\eqref{eq:case2}}{\rho}} \ge \frac{\lambda \cdot \ln\left( \frac{1}{\rho} \right) + \rho - \lambda}{\lambda + \frac{\rho - \lambda}{\rho}} \ge \Gamma^*, 
    \end{align*}
where the second inequality follows from the second statement of Lemma~\ref{lem:prob_alg}, and the third inequality follows from the third statement of Lemma~\ref{lem:prob_alg} and the assumption that $\ln \left( \frac{1}{\rho} \right) \ge \rho$.

\noindent \textbf{Case 2:} if $\ln \left( \frac{1}{\rho} \right) < \rho$, we have
    \begin{align*}
    \frac{\alg}{\opt} \ge \frac{\alg}{\lambda + \frac{\alg - \lambda \cdot \ln \left( \frac{1}{\rho}\right)}{\rho}} \ge \frac{\lambda \cdot \ln\left( \frac{1}{\lambda} \right)}{\lambda + \frac{\lambda}{\rho} \cdot \ln\left( \frac{\rho}{\lambda} \right)} \ge \frac{\lambda \cdot \ln\left( \frac{1}{\lambda} \right)}{\lambda + \frac{1}{e}} = \Gamma^*,
    \end{align*}
where the second inequality follows from the first statement of Lemma~\ref{lem:prob_alg} and the assumption that $\ln \left( \frac{1}{\rho} \right) < \rho$, and the third inequality follows from the fact that $x \ln \left(\frac{1}{x}\right) \le \frac{1}{e}$ for $x \in [0,1]$.
\end{proof} \subsection{Lower bound}

Next, we show that single-threshold algorithms are optimal among all deterministic identity-blind algorithms. Specifically, the identity-blindness gap of deterministic identity-blind algorithms is at most $\Gamma^*$.

As a warm up, we first prove that the guarantee of Theorem~\ref{thm:max_prob_single} is tight among all single-threshold algorithms. Within this section, we use $\lambda \approx 0.245$ to denote the solution of the following function and $\rho \approx 0.513$ to denote the corresponding minimizer of the left hand side of the equation.
\[
\min_{\rho \ge \lambda} \frac{\lambda \cdot \ln\left( \frac{1}{\rho} \right) + \rho - \lambda}{\lambda + 1 - \frac{\lambda}{\rho}} = \frac{\lambda \cdot \ln\left( \frac{1}{\lambda} \right)}{\lambda + \frac{1}{e}}
\]

\begin{theorem}
\label{thm:prob_single_neg}
    For the max-probability objective, no single-threshold algorithm achieves a better identity-blindness gap than $\Gamma^* \approx 0.562$.
\end{theorem}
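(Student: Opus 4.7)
The aim is to exhibit a single instance on which every single-threshold rule achieves ratio at most $\Gamma^*$ against the order-aware optimum, matching Theorem~\ref{thm:max_prob_single}. The construction is guided by the earlier discussion that the worst case against an arbitrary single-threshold algorithm involves three stages, together with the tightness conditions one can read off from the positive proof: equality in the second statement of Lemma~\ref{lem:prob_alg} demands long sequences of small-probability Bernoulli boxes (so that $\tfrac{x}{1-x}$ can be replaced by $\ln\tfrac{1}{1-x}$ without slack), while equality in its third statement demands later stages whose fires strictly dominate every value seen so far.

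Concretely, I would parametrize by a small $\varepsilon\to 0$ and use three groups of boxes with strictly increasing support values $v_1<v_2<v_3$ (with tiny independent continuous noise to preserve the no-point-mass assumption on the maximum). Group~$k$ consists of $N_k$ i.i.d.\ copies of a Bernoulli$(\varepsilon)$ box with value $v_k$ on fire and $0$ otherwise, and the sizes are tuned so that, as $\varepsilon\to 0$, the group-wise no-fire probabilities $(1-\varepsilon)^{N_k}$ converge to constants dictated by the minimizer $(\lambda,\rho)$ of~\eqref{eq:prob_ratio}: roughly $(1-\varepsilon)^{N_1}\to\rho$, $(1-\varepsilon)^{N_2}\to\lambda/\rho$, and the third group is calibrated so that $\Pr[\max_i v_i<v_3]\to\lambda$ overall. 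Under this parametrization the ``agreement probability'' $\rho$ from the positive proof and the target $\Pr[\max<\tau^*]=\lambda$ are realized simultaneously by a canonical threshold $\tau^*\in(v_1,v_2]$.

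I would then analyze an arbitrary threshold $\tau$ in the Poissonized limit, splitting into four cases according to which gap it occupies: $\tau\le v_1$, $v_1<\tau\le v_2$, $v_2<\tau\le v_3$, or $\tau>v_3$. For the order-aware optimum one verifies stage by stage that it rejects throughout groups~$1$ and~$2$ (a strictly larger value arrives from a later group with constant probability) and accepts the first fire in group~$3$, producing in the limit exactly the denominator $\lambda+1-\lambda/\rho$ of~\eqref{eq:prob_ratio}. For $\alg$, the critical case $v_1<\tau\le v_2$ reproduces the numerator $\lambda\ln(1/\rho)+\rho-\lambda$; the remaining three cases are handled by direct comparison and shown to give ratios strictly less than $\Gamma^*$.

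The main obstacle will be the critical case $v_1<\tau\le v_2$: there one must decompose ``$\alg$ catches the maximum'' into the three contributions analogous to \eqref{eq:case1}, \eqref{eq:case2}, \eqref{eq:case3} and show that both the Bernoulli-to-logarithm approximation and the inequality $\Pr[\max_{k>j} v_k\le v_j\mid A_j]\le 1$ become tight in the $\varepsilon\to 0$ limit. A secondary technical point is that the continuous perturbations used to satisfy the no-point-mass assumption contribute only $o(1)$ correction terms, which must be tracked carefully to ensure the limiting ratio is exactly $\Gamma^*$ rather than something strictly larger.
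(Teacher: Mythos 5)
The paper's proof fixes the \emph{distributions} (boxes $1,\dots,n$, box $i$ a Bernoulli$(\varepsilon)$ with value $i$, all values distinct) and then chooses the \emph{arrival order adversarially as a function of the algorithm's threshold} $T$; in particular the boundary $T^*-T$ of the ascending/descending segments and the resulting quantity $\lambda'=(1-\varepsilon)^{n-T+1}$ both depend on $T$, and the two-parameter slack over $(\lambda',\rho)$ is closed by monotonicity of the ratio in $\lambda'$. Your plan instead tries to exhibit one fixed instance (three i.i.d.\ Bernoulli groups at levels $v_1<v_2<v_3$, with a fixed arrival order) against which \emph{every} threshold fails. That is a genuinely stronger claim, and two concrete obstacles block it as written. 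First, you never specify the arrival order, which is the entire content of the construction: whether the single-threshold algorithm wins with multiple fires in a segment hinges entirely on whether values later in the sequence are smaller. Second, and more importantly, within a group you take \emph{i.i.d.} boxes with \emph{independent} continuous noise; then conditioned on $m\ge 2$ fires in that group, the first fire is the group maximum only with probability $1/m$, not~$1$. The paper crucially avoids this by giving each box a distinct deterministic value and arranging the high-value segment in descending order, so that the first fire is deterministically the maximum. Plugging $1/m$ rather than $1$ into either $\alg$ or the adaptive lower bound on $\opt$ produces different formulas (roughly replacing $1-(1-\varepsilon)^{N}$ terms by $\sum_m \Pr[\text{Bin}=m]/m$), and it is not true that these recover the exact numerator $\lambda\ln(1/\rho)+\rho-\lambda$ or denominator $\lambda + 1 - \lambda/\rho$ of~\eqref{eq:prob_ratio}.

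A secondary inconsistency: you set $(1-\varepsilon)^{N_2}\to\lambda/\rho$ and $(1-\varepsilon)^{N_3}\to\lambda$ while also asking that the canonical threshold $\tau^*\in(v_1,v_2]$ satisfy $\Pr[\max<\tau^*]=\lambda$; but $\Pr[\max<\tau^*]=(1-\varepsilon)^{N_2+N_3}\to\lambda^2/\rho\neq\lambda$. Even granting a repaired calibration and replacing i.i.d.\ groups by the paper's distinct deterministic values in a chosen order, the harder question — whether a single fixed order defeats \emph{all} thresholds simultaneously, or whether you must tailor the order to $T$ as the paper does — is left untouched, and the proposal supplies no argument (only an assertion) that the three non-critical threshold ranges yield ratios below $\Gamma^*$. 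I would recommend switching to the paper's structure: fix the distribution, then build the order segment boundaries around the given $T$ and close with monotonicity in $\lambda'$.
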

\begin{proof}
Consider an instance that consists of $n$ boxes, where the $i$-th box has value $i$ with probability $\varepsilon$ and $0$ otherwise, where $\varepsilon \to 0$ and $\eps \gg \frac{1}{n}$. Consider an arbitrary single-threshold algorithm, and let $T$ be its threshold. Without loss of generality, we assume that $T$ is an integer and the single-threshold algorithm always accepts the box $T$ when $v_T = T$. We distinguish between two cases, depending on the value of $\lambda' \eqdef (1-\varepsilon)^{n-T+1}$.

\begin{figure}[!htb]
   \begin{minipage}{0.48\textwidth}
     \centering
     \includegraphics[width=.9\linewidth]{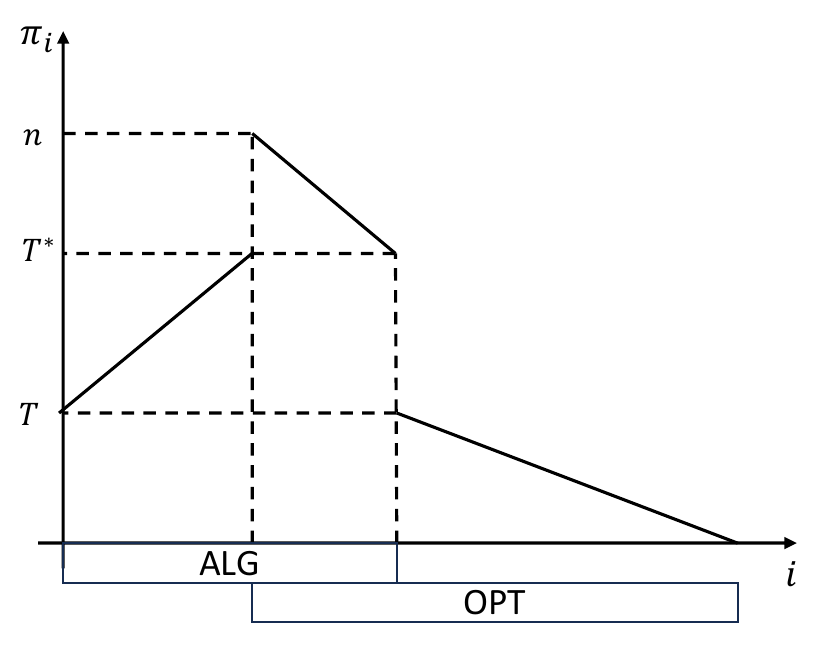}
     \caption{Case 1a.}\label{fig:case1a}
   \end{minipage}\hfill
   \begin{minipage}{0.48\textwidth}
     \centering
     \includegraphics[width=.9\linewidth]{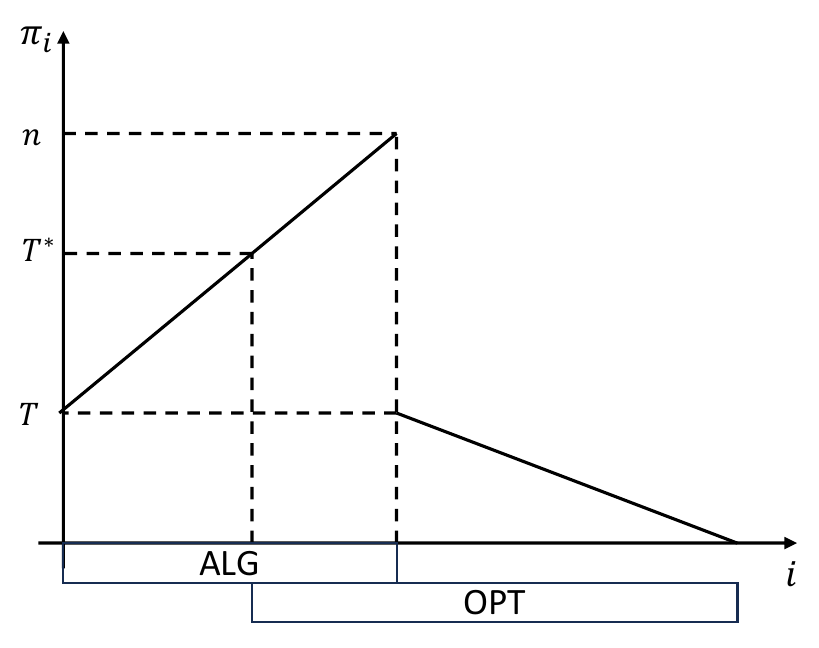}
     \caption{Case 2.}\label{fig:case2}
   \end{minipage}
\end{figure}

\paragraph{Case 1: $\lambda' \ge \lambda$.} Let $T^* = n+1 - \log_{1-\varepsilon} \frac{\lambda}{\rho}$. We further consider two cases.
\paragraph{Case 1a: $T^*>T$.} Consider the following arrival order (refer to figure~\ref{fig:case1a}), where
\[
\pi_i =
\begin{cases}
\text{Box } T+i-1 & \text{if } 1 \le i \le T^*-T \\
\text{Box } n+T^*-T+1 - i & \text{if } T^*-T+1 \le i \le n-T+1 \\
\text{Box } n+1-i & \text{if } n-T+1 < i \le n
\end{cases}
\]
Let $X_1$ (respectively, $X_2, X_3$) denote the random variable indicating the number of realized values below $T$ (respectively, between $T$ and $T^*$, and above $T^*$). Then, we have
\begin{multline*}
\alg = \Prx{X_2 = 1 \wedge X_3=0} + \Prx{X_2 = 0 \wedge X_3 >0} \\
= (T^* - T) \cdot \varepsilon \cdot (1-\varepsilon)^{n-T} + (1-\varepsilon)^{T^*-T} \cdot \left(1 - (1-\varepsilon)^{n-T^*+1} \right) \approx \lambda' \cdot \ln \left( \frac{\lambda}{\rho \lambda'} \right) + \frac{\rho \lambda'}{\lambda} - \lambda'~.
\end{multline*}
On the other hand, an adaptive algorithm can skip the first $T^*-T$ boxes and then accept the first realized box afterwards. Then, we have
\begin{multline*}
\opt \ge \Prx{X_3 > 0} + \Prx{X_1>0 \wedge X_2=0 \wedge X_3=0} \\
= \left(1 - (1-\varepsilon)^{n-T^*+1} \right) + \left(1-(1-\varepsilon)^{T-1} \right) \cdot (1-\varepsilon)^{n-T+1} \approx 1 - \frac{\lambda}{\rho} + \lambda' ~.
\end{multline*}
Therefore, the identity-blindness gap is at most 
\[
\frac{\lambda' \cdot \ln \left( \frac{\lambda}{\rho \lambda'} \right) + \frac{\rho \lambda'}{\lambda} - \lambda'}{1 - \frac{\lambda}{\rho} + \lambda'} \le \frac{\lambda \cdot \ln \left( \frac{1}{\rho} \right) + \rho - \lambda}{1 - \frac{\lambda}{\rho} + \lambda} = \Gamma^*,
\]
where we use the fact that the left hand side is a decreasing function for $\lambda' \in [\lambda,1]$.

\paragraph{Case 1b: $T^* \le T$.} Consider the arrival order when boxes arrive in descending order, i.e., $ \pi_i = n+1-i$. Then the optimal algorithm wins with probability $1$ while the single-threshold algorithm wins on if the maximum value is at least $T \ge T^*$. Consequently,
\[
\frac{\alg}{\opt} \le 1-(1-\varepsilon)^{n-T^*+1} = 1 - \frac{\lambda}{\rho} \approx 0.522 < \Gamma^*~.
\]

\paragraph{Case 2: $\lambda' < \lambda$.} Let $T^* = n+1 - \log_{1-\varepsilon} \frac{1}{e}$.
Consider the following arrival order (refer to figure~\ref{fig:case2}), where
\[
\pi_i =
\begin{cases}
\text{Box } T+i-1 & \text{if } 1 \le i \le n-T+1 \\
\text{Box } n+1-i & \text{if } n-T+1 < i \le n
\end{cases}
\]
It is straightforward to verify that the single-threshold algorithm wins if and only if exactly one value is realized above $T$, i.e.,
\[
\alg = (n-T+1) \cdot \varepsilon \cdot (1-\varepsilon)^{n-T} \approx \lambda' \cdot \ln \left( \frac{1}{\lambda'} \right)~.
\]
On the other hand, the optimal algorithm can skip the first $T^*-T$ boxes and then accept the first realized box. By doing so, the algorithm wins when exactly one box is realized with value above $T^*$ or no box is realized with value above $T$.
\[
\opt \ge (n-T^*+1) \cdot \varepsilon \cdot (1-\varepsilon)^{n-T^*+1} + (1-\varepsilon)^{n-T+1} \approx \frac{1}{e} + \lambda'~.
\]
Therefore, the identity-blindness gap is at most 
\[
\frac{\lambda' \cdot \ln\left( \frac{1}{\lambda'} \right)}{\lambda' + \frac{1}{e}} \le \frac{\lambda \cdot \ln\left( \frac{1}{\lambda} \right)}{\lambda + \frac{1}{e}} = \Gamma^*,
\]
where we use the fact that the left hand side is an increasing function for $\lambda' \in [0,\lambda]$.
\end{proof}

Finally, we prove that the single-threshold algorithm from Theorem~\ref{thm:max_prob_single} achieves the best possible identity-blindness gap among all deterministic algorithms. Our construction exploits the behavior of a fixed deterministic identity-blind algorithm.

\begin{theorem}
    For any deterministic identity-blind algorithm, the identity-blindness gap is at most $\Gamma^* \approx 0.562$. 
\end{theorem}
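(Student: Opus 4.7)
The plan is to reduce the analysis of an arbitrary deterministic identity-blind algorithm to the single-threshold lower bound of Theorem~\ref{thm:prob_single_neg}, using the same family of instances: $n$ boxes where box $i$ has value $i$ with probability $\varepsilon$ and $0$ otherwise, with $\varepsilon\to 0$ and $\varepsilon \gg 1/n$. The first observation is that in this regime the probability of observing two or more nonzero values is $O((n\varepsilon)^2)$, which is of strictly lower order than the $\Theta(1)$ winning probabilities we will analyze. Conditioning on the event that at most one nonzero value appears, a deterministic identity-blind algorithm $\alg$ is completely described by a function $f\colon [n]\times[n]\to\{0,1\}$, where $f(v,t)=1$ iff $\alg$ accepts the value $v$ when it appears at time $t$ preceded by zeros.

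Next, I would perform a monotonization step on $\alg$. For each time $t$, let $A(t)=\{v: f(v,t)=1\}$. Conditioned on exactly one nonzero value $v$ being realized, $v$ is automatically the maximum, so the algorithm wins iff it accepts $v$; hence replacing $A(t)$ with any superset of $A(t)$ only weakly improves the winning probability, against every arrival order simultaneously. Applying this WLOG to obtain upper-closed sets, we may assume $A(t)=[\tau(t),n]$ for some threshold profile $\tau\colon[n]\to[n+1]$. The algorithm is now a time-varying threshold rule, and the task reduces to showing that no such rule attains gap better than $\Gamma^*$.

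From $\tau(\cdot)$ I would extract a single ``effective threshold'' $T$ via an argument that mirrors the case split in Theorem~\ref{thm:prob_single_neg}. Specifically, compute $\lambda'$ as the probability that the realized maximum lies below the positions at which $\tau$ is willing to accept, and split according to whether $\lambda' \ge \lambda$ or $\lambda' < \lambda$. In each branch, construct an arrival order analogous to Case~1a, Case~1b, or Case~2 of Theorem~\ref{thm:prob_single_neg}: either (i) place a cluster of medium-value boxes at positions where $\tau$ accepts and then hide a strictly larger value after them (so $\alg$ accepts a non-max while $\opt$ skips the cluster and catches the max), (ii) present values in descending order when $\tau$ is uniformly large (so $\opt$ wins with probability $1$ while $\alg$ only wins when the single realized max happens to land in a position $\tau$ accepts), or (iii) present $n-T+1$ high values in the early slots followed by low values at positions where $\tau$ rejects. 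In each case the resulting $\alg/\opt$ ratio matches the single-threshold bound, yielding $\Gamma^*$.

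The main obstacle is precisely the fact that $\tau$ can depend on $t$: a clever algorithm might try to undercut each adversarial order by adapting its threshold to the position. I expect the key step will be an averaging/monotonicity argument showing that only the worst value of $\tau$ on the ``active window'' chosen by the adversary governs the ratio, so the time-varying freedom cannot beat a cleverly chosen single threshold on a tailored order. Concretely, I would shift the cluster position along the time axis to pick the interval where $\tau$ is most favorable to the adversary, and then argue (via the same convex bounds $\tfrac{x}{1-x}\ge \ln\tfrac{1}{1-x}$ and $x\ln\tfrac{1}{x}\le \tfrac{1}{e}$ used in Theorem~\ref{thm:max_prob_single}) that the resulting expression is bounded by $\Gamma^*$, concluding the proof.
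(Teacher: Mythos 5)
Your proposal starts from the wrong regime and consequently builds on a reduction that does not hold. You claim that with $\varepsilon\gg 1/n$ ``the probability of observing two or more nonzero values is $O((n\varepsilon)^2)$,'' treating this as negligible. But in this regime $n\varepsilon\to\infty$, so $(n\varepsilon)^2$ is not small; indeed, for the instance to have constant-order winning probabilities one needs the survival probability $(1-\varepsilon)^{n-T+1}$ to be $\Theta(1)$, which already forces $\Theta(1/\varepsilon)$ boxes each to be realized with probability $\varepsilon$, making two-or-more realizations a constant-probability (and, over all $n$ boxes, near-certain) event. The paper's own bookkeeping — e.g.\ $\Prx{X_2=1\wedge X_3=0} + \Prx{X_2=0\wedge X_3>0}$ in Theorem~\ref{thm:prob_single_neg} — explicitly handles multiple realizations. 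Once the ``at most one nonzero'' conditioning collapses, your monotonization step collapses with it: replacing $A(t)$ by a superset is only lossless under that conditioning, but in general accepting a smaller value now forecloses accepting a larger value later, so the time-varying threshold reduction is not WLOG.

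The deeper missing ingredient is what replaces ``time-varying threshold.'' A deterministic identity-blind algorithm is described by an arbitrary accept/reject decision $f(v,t)$ when value $v$ first appears at position $t$ after zeros, with no monotonicity in either argument. The paper handles this by adding $2^n-n$ dummy boxes and running an adaptive binary search on arrival positions: it probes the algorithm's decision at the midpoint position for each value $k$, puts $k$ into $A$ or $R$ accordingly, and then halves the search interval so that the eventual arrival order (ascending $A$, then $B$, then descending $R$) is consistent with all the probed decisions simultaneously. Without the exponential supply of dummy boxes there is no room to ``shift the cluster position to pick the interval where $\tau$ is most favorable,'' because the positions of the $n$ real boxes are too constrained, and without the binary search there is no way to force an arbitrary $f(v,t)$ into a clean accept-set/reject-set split. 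Your proposal correctly identifies that the obstacle is the $t$-dependence of the threshold, and you correctly anticipate the final case analysis mirroring Cases 1a/1b/2, but the bridge from ``arbitrary deterministic $f$'' to ``a single-threshold-like structure on a tailored order'' is exactly the part that requires the dummy-box binary search, and that idea is absent.
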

\begin{proof}
Let there be $n$ real boxes, where the $i$-th box has value $i$ with probability $\varepsilon$ and $0$ otherwise. These boxes play the same role as our construction in Theorem~\ref{thm:prob_single_neg}. Similarly, we assume that $n \to \infty, \varepsilon \to 0$, and $\varepsilon \gg \frac{1}{n}$.
In addition, let there be $2^n - n$ dummy boxes that have deterministic values of $0$.

Let $a = \left\lfloor \frac{1}{\varepsilon} \cdot \left(\ln\left(\frac{1}{\lambda}\right)-1\right) \right\rfloor$, $t_1 = \left\lfloor \frac{1}{\varepsilon} \right\rfloor$, and $t_2 = \left\lfloor \frac{1}{\varepsilon} \cdot \ln\left(\frac{\rho}{\lambda}\right) \right\rfloor$.
Let $\alg$ be an arbitrary deterministic identity-blind algorithm. We construct an instance according to the behavior of $\alg$. We only specify the arrival order of the real boxes. Below, $i_k$ denote the arrival time of the $k$-th real box, i.e., $\pi(i_k) = k$.
\begin{tcolorbox}
Initialize $\ell = 0, u = 2^n$; \hfill lower/upper indices \\
Initialize $A = R = \varnothing$. \hfill accept/reject sets
\paragraph{First Stage.} For $k$ from $1$ to $n-t_1$: Let $i_k = \frac{\ell+u}{2}$. Consider the behavior of $\alg$ when the first $i_k-1$ boxes have value $0$ and the $i_k$-th box arrives with value $k$. 
\begin{itemize}
    \item If $\alg$ accepts it, let $\ell = i_k$, $A = A \cup \{k\}$.
    \item Else, let $u = i_k$, $R = R \cup \{k\}$.
\end{itemize}
\paragraph{Second Stage.}
\paragraph{If $|A| < a$:}
for $k$ from $\left(n-t_1+1\right)$ to $(n - t_2)$: Let $i_k = \frac{\ell+u}{2}$. Consider the behavior of $\alg$ when the first $i_k-1$ boxes have value $0$ and the $i_k$-th box arrives with value $k$.
\begin{itemize}
    \item If $\alg$ accepts it, let $\ell = i_k$, $A = A \cup \{k\}$.
    \item Else, let $u = i_k$, $R = R \cup \{k\}$.
\end{itemize}
Finally, let the remaining boxes $n,n-1,...,n-t_2+1$ arrive sequentially in descending order from $i_{n-t_2}+1$ to $i_{n-t_2} + t_2$. We use $B$ to denote these boxes.
\paragraph{Else $|A| \ge a$:}
let the remaining boxes $n-t_1+1,...,n$ arrive sequentially in ascending order from $i_{n-t_1}+1$ to $i_{n-t_1} + t_1$. We use $B$ to denote these boxes.
\end{tcolorbox}

In both cases, the real boxes come in the order of $A \to B \to R$, where boxes in $A$ come in ascending order and boxes in $R$ come in descending order. Moreover, for an arbitrary box $k \in A$, $\alg$ would like to accept it if it is the first realized box; for an arbitrary box $k \in R$, $\alg$ would like to reject it if it is the first realized box.

Consequently, if the first realized box belongs to $A$, $\alg$ would accept it and wins only when all other boxes of $A \cup B$ are not realized; if the first realized box belongs to $R$, $\alg$ would reject it and lose the game since all remaining boxes have smaller values.

We have not specified the behavior of $\alg$ if the first realized box belongs to $B$. It is straightforward to verify that the best strategy is to accept it for both cases. We omit the tedious case analysis.

Finally, we calculate the identity-blindness gap of $\alg$ with respect to the two cases depending on our construction of the second stage. The first case is similar to the case 1a of Theorem~\ref{thm:prob_single_neg}. Note that $|A|<a$ at the end of the first stage, it must be the case that $|A| < a+t_1-t_2$ at the end of the second stage. Let $x \eqdef (1-\varepsilon)^{|A|} > (1-\varepsilon)^{a+t_1-t_2} \approx \rho$
\[
\alg \le |A| \cdot \varepsilon \cdot (1- \varepsilon)^{|A|+|B|-1} + ( 1-\varepsilon)^{|A|} \cdot \left( 1 - \left( 1 - \varepsilon \right)^{|B|} \right) \approx \frac{x\lambda}{\rho} \cdot \ln \left(\frac{1}{x}\right) + x - \frac{x \lambda}{\rho},
\]
where the first term corresponds to the case when the first realized box belongs to $A$ and the second term corresponds to the case when the first realized box belongs to $B$. On the other hand, an adaptive algorithm that knows the order could accept the first realized box in $B \cup R$. It wins when at least $1$ box is realized in $B$, or no box is realized in $A \cup B$. Therefore,
\[
\opt \ge \left( 1 - \left( 1 - \varepsilon \right)^{|B|} \right) + \left( 1 - \varepsilon \right)^{|A|+|B|} \approx 1 - \frac{\lambda}{\rho} + \frac{x \lambda}{\rho}~.
\]
Together, we have
\[
\frac{\alg}{\opt} \lesssim \frac{\frac{x\lambda}{\rho} \cdot \ln \left(\frac{1}{x}\right) + x - \frac{x \lambda}{\rho}}{1 - \frac{\lambda}{\rho} + \frac{x \lambda}{\rho}} \le \frac{\lambda \cdot \ln \left(\frac{1}{\rho}\right) + \rho - \lambda}{1 - \frac{\lambda}{\rho} + \lambda} = \Gamma^*,
\]
where the second inequality holds since the function is decreasing for $x \in [\rho,1]$. 

The second case is similar to the case 2 of Theorem~\ref{thm:prob_single_neg}. Let \[
y \eqdef (1-\varepsilon)^{|A|+|B|} \le (1-\varepsilon)^{a+t_1} \approx \lambda~.
\]
Then, we have
\[
\alg \le \left(|A| + |B| \right) \cdot \varepsilon \cdot \left( 1-\varepsilon \right)^{|A|+|B|-1} \approx y \cdot \ln \left( \frac{1}{y}\right)~,
\]
where $\alg$ wins only if exactly one box in $A \cup B$ is realized.
On the other hand, an adaptive algorithm that knows the order could accept the first realized box in $B \cup R$. It wins when exactly $1$ box is realized in $B$, or no box is realized in $A \cup B$. Therefore,
\[
\opt \ge |B| \cdot \varepsilon \cdot ( 1-\varepsilon)^{|B|-1} + \left( 1-\varepsilon \right)^{|A|+|B|} \approx \frac{1}{e} + y~,
\]
where for the approximaiton we use the definition of $y$, and since in the second case, $|B|=t_1 \approx \frac{1}{\varepsilon}$.
Together, we have 
\[
\frac{\alg}{\opt} \lesssim \frac{y \ln \left( \frac{1}{y}\right)}{y + \frac{1}{e}} \le \frac{\lambda \ln \left( \frac{1}{\lambda}\right)}{\lambda+ \frac{1}{e}} = \Gamma^*,
\]
where the second inequality holds as the function is increasing in $y \in [0, \lambda]$.
This concludes the proof of the theorem.
\end{proof}

\bibliographystyle{abbrvnat}
\bibliography{order}

\appendix

\section{No-large-point-mass Assumption: Discussion}

\label{sec:discussion}

Our analysis of the max probability objective is tight with respect to instances that satisfy the assumption that for the distribution of the maximum, there are no large point masses.

We first mention that our positive result (Theorem~\ref{thm:max_prob_single}) can be adjusted to every instance, by using randomization in the following way:
Instead of calculating $\tau$ such that $$ \Prx{\max_{i\in[n]} v_i < \tau } = \lambda,$$
which does not always exists, one can
find a  value of $\tau$ such that $$ \Prx{\max_{i\in[n]} v_i < \tau } \leq  \lambda \leq \Prx{\max_{i\in[n]} v_i \leq \tau} ,$$ which always exists\footnote{There might be more than one value that satisfies this. If so, one can choose an arbitrary one.}. 
Then, if we denote by $p_i \eqdef \Prx{v_i < \tau}$, and $q_i \eqdef \Prx{v_i =\tau}$, then we can observe that either $\prod_{i \in [n]} p_i  = \prod_{i \in [n]} (p_i+q_i)  $, which implies that  $q_i=0$ for all $i$  (i.e., no distribution has a point mass in $\tau$), and the analysis of Theorem~\ref{thm:max_prob_single} works without changes;
otherwise, if we denote by $f(\xi) = \prod_{i \in [n]} (p_i+\xi \cdot q_i)$, then we know that 
$$f(0) =  \prod_{i \in [n]} p_i < \prod_{i \in [n]} (p_i+q_i) =f(1), $$ and $f$ is a strictly monotone function in $[0,1]$ which implies that there exists a unique value of $\xi$ such that  
$  f(\xi) = \lambda $.
The same analysis of Theorem~\ref{thm:max_prob_single} applies to the \textbf{randomized} algorithm that selects values that are strictly larger than $\tau$, and selects $\tau$  with probability $\xi$.
Overall, this argument shows:
\begin{claim}
    The above single-threshold randomized algorithm has an identity-blindness gap of $\Gamma^* \approx 0.562$
    for every instance.
\end{claim}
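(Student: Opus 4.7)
The plan is to run the analysis of Theorem~\ref{thm:max_prob_single} almost verbatim, replacing the deterministic ``pass the threshold'' event $\{v_i \ge \tau\}$ with the randomized event ``$v_i > \tau$, or $v_i = \tau$ and the independent acceptance coin flipped by $\alg$ at box $i$ comes up accept.'' Let $\tilde p_i$ denote the probability that box $i$ passes this randomized threshold. By the construction of the acceptance probability, $\prod_i (1 - \tilde p_i) = \lambda$, which is precisely the role that $\prod_i (1 - p_i) = \lambda$ plays in the proof of Theorem~\ref{thm:max_prob_single}. Since the coin flips are mutually independent and independent of the values, the passing events are independent across boxes, preserving the product structure that drives the analysis.

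First I would re-derive the $\opt$ upper bound. The inequality $\prod_i p_i \le \lambda$ (which holds by the choice of $\tau$ in the construction) gives $\Prx{\max_i v_i < \tau} \le \lambda$, so the same case split yields $\opt \le \lambda + \max_j \opt_j(\tau)$. No modification of this step is needed, because $\opt_j(\tau)$ is defined purely in terms of values and is unaffected by $\alg$'s internal coin flips.

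Next I would redo the $\alg$ lower bound. Redefine the events $T_j, A_j, R_j$ with respect to the randomized passing event: $T_j$ is ``box $j$ is the first to pass the randomized threshold.'' The three-term decomposition \eqref{eq:case1}+\eqref{eq:case2}+\eqref{eq:case3} and the definition of $\rho$ transfer cleanly with $p_i$ replaced by $\tilde p_i$ throughout. Lemma~\ref{lem:prob_alg} then goes through using the same inequality $x/(1-x) \ge \ln(1/(1-x))$ applied to $\tilde p_j$, together with $\prod_j (1 - \tilde p_j) = \lambda$. The final case split ($\ln(1/\rho) \gtrless \rho$) and the resulting ratio $\Gamma^*$ are consequently unchanged.

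The main obstacle is verifying the pointwise winning conditions for $\alg$ in cases \eqref{eq:case1} and \eqref{eq:case3} when ties at value $\tau$ are possible. The key observation is: if $j$ is the unique box to pass the randomized threshold, then any other box $k \ne j$ satisfies $v_k \le \tau$ (because $v_k > \tau$ would force $k$ to pass deterministically), and the accepted value $v_j \ge \tau$ equals $\max_k v_k$ (with possible ties at $\tau$)---which suffices for the max-probability objective. An analogous observation shows that ``no later box passes the randomized threshold'' remains a valid sufficient event for $\alg$ to win in case \eqref{eq:case3}, even when $v_j = \tau$: any such later $k$ has $v_k \le \tau \le v_j$. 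Once these two pointwise statements are confirmed, the rest of the derivation is routine and mirrors Theorem~\ref{thm:max_prob_single}, yielding the gap $\Gamma^*$ for every instance.
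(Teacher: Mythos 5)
There is a genuine gap in the step where you claim that the decomposition $(\star)$ of $\opt_i(\tau)$ and the bound $\opt_i(\tau) \le \eqref{eq:case2}/\rho$ ``transfer cleanly'' once $T_j,A_j,R_j,\rho$ are redefined via the randomized passing event. In the paper's proof, $(\star)$ rests on the identity $T_j = A_j \cup R_j$: conditioned on $j$ being the \emph{first} box among $\{i,\ldots,n\}$ with $v_j \ge \tau$, every earlier box has $v_k < \tau$, so (by the WLOG that $\opt_i(\tau)$ accepts only values $\ge\tau$) $\opt$ is guaranteed to reach box $j$ and then either accept or reject it. This is exactly what makes the partition $\cup_j T_j$ line up with $\opt$'s decision tree. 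With your randomized $T_j$ this alignment breaks: on $T_j$ an earlier box $k\in[i,j)$ may have $v_k=\tau$ with $\alg$'s coin rejecting it, yet $\opt$ (which never sees the coin) may well accept box $k$. So $\Prx{A_j}+\Prx{R_j}<\Prx{T_j}$, and the event ``no box randomly passes'' can still have $\opt$ winning by grabbing a $\tau$-valued box; both produce leftover mass that $(\star)$ no longer accounts for. Consequently the rearrangement $\opt_i(\tau)\bigl(1-\sum_j\Prx{R_j}\bigr)\le\sum_j\Prx{A_j}\Prx{\cdot\mid A_j}$, which is the entire source of $\opt_i(\tau)\le\eqref{eq:case2}/\rho$, is not justified with the redefined events.

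Your pointwise observations about $\alg$'s winning conditions in cases \eqref{eq:case1} and \eqref{eq:case3} are correct and are indeed the part that does transfer; the first statement of Lemma~\ref{lem:prob_alg} (that $\alg\ge\lambda\ln(1/\lambda)$) also goes through with $\tilde p_i$. But the missing ingredient is a way to relate the randomized threshold to $\opt$'s deterministic decision boundary at $v=\tau$. The paper sidesteps a direct re-derivation and instead relies on reducing to the continuous case; making the claim rigorous requires either (a) a coupling/perturbation argument in which each tied value $\tau$ is replaced by $\tau+\eta W_i$ so that the randomized acceptance becomes a deterministic threshold on a continuous instance, together with a careful comparison of winning probabilities under tie-breaking for both $\alg$ and $\opt$, or (b) a reformulated decomposition of $\opt_i(\tau)$ that explicitly accounts for $\opt$ accepting $\tau$-valued boxes that $\alg$'s coins reject (and for $\opt$ winning even when no box randomly passes). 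As written, your proof asserts that the decomposition carries over but does not handle either of these effects, and that is where the argument fails.
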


We next show that if we remove the assumption that the distribution of the maximum value has no large mass points, then an identity-blindness gap of $\Gamma^* \approx 0.562$ is not achievable for deterministic algorithms.

We first show an upper bound for deterministic algorithms.
\begin{claim}
    No identity-blind deterministic algorithm can guarantee an identity-blindness gap of more than $\frac{\mu}{1-\mu} \approx 0.517$, where $\mu \approx 0.341$ is the unique solution in $[0,1)$ to the following equation.
\[
\frac{\mu}{1-\mu} = \frac{\ln \left( \frac{1}{\mu}\right)}{\ln \left( \frac{1}{\mu}\right) +1}~.
\] 
Moreover, this can be shown even when the maximum value is unique with probability $1$.
\end{claim}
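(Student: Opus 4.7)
The plan is to adapt the adversarial binary-search construction from the proof of the general $\Gamma^*$ upper bound for deterministic identity-blind algorithms (the theorem immediately preceding this discussion), refining it so that the maximum is unique with probability $1$ and the ratio sharpens from $\Gamma^*$ to $\mu/(1-\mu) = L/(L+1)$, where $L := \ln(1/\mu)$. The target decomposition is $\alg \le L\mu$ and $\opt \ge L\mu + \mu$, so that $\alg/\opt \le L/(L+1) = \mu/(1-\mu)$ by the defining equation of $\mu$.

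Concretely, for any deterministic identity-blind $\alg$, I would construct an instance consisting of $n$ Bernoulli boxes (each realizing a distinct perturbed value $1+i\delta$ with probability $p = L/n$ and value $0$ otherwise, for arbitrarily small $\delta > 0$), a single anchor box whose value breaks the tie in the no-realization branch, and exponentially many $0$-valued dummies used to binary-search $\alg$'s deterministic decisions on all-zero prefixes exactly as in the proof of Theorem~\ref{thm:prob_single_neg} and the preceding $\Gamma^*$ theorem. After calibration, the $n$ Bernoullis are laid down contiguously in ascending order of $i$ (so the latest-arriving realized box has the strictly largest perturbed value), and the anchor is placed adversarially based on $\alg$'s deterministic response to the anchor's value. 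Uniqueness of the maximum holds in either branch: if any Bernoulli realizes, the largest-indexed realized box gives the strictly unique maximum; if no Bernoulli realizes, the anchor is the only nonzero box.

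For the upper bound on $\alg$, following the case-2 blueprint of Theorem~\ref{thm:prob_single_neg}, any identity-blind strategy on the Bernoulli block can win at most with probability $np(1-p)^{n-1} \to L\mu$, namely the probability that exactly one Bernoulli realizes. For the lower bound on $\opt$, the order-aware algorithm can replicate $\alg$ on the Bernoulli block and, additionally, accept the anchor at its known position in the disjoint no-realization event of probability $(1-p)^n \to \mu$, yielding $\opt \ge L\mu + \mu$. Combining the two bounds yields $\alg/\opt \le L/(L+1) = \mu/(1-\mu)$.

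The main obstacle is showing that $\alg$ genuinely cannot also catch the no-realization event, despite the anchor's value being in principle recognizable on sight. The resolution relies on a case split on $\alg$'s deterministic response at the anchor's value, combined with adversarial placement of the anchor: if $\alg$ never accepts the anchor's value, the $\mu$-event is automatically forfeited; if $\alg$ accepts the anchor's value at some position $t^*$, the adversary places the anchor at $t^*$ and positions the Bernoullis so that in the ``exactly one realization'' event giving $\alg$ its $L\mu$ contribution, the realization occurs strictly after $t^*$, making $\alg$'s acceptance of the anchor wrong in that event. Layering this case split onto the binary-search calibration of $\alg$'s prefix decisions, while maintaining the $L\mu$ upper bound on $\alg$ uniformly across both branches, is the delicate technical step.
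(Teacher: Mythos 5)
Your high-level instinct—case-split on whether the deterministic algorithm accepts the anchor's value, and use the anchor to tease apart a point-mass construction—matches the paper, but the execution has two concrete gaps that the paper's actual proof avoids.

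First, the target decomposition $\alg \le L\mu$, $\opt \ge L\mu + \mu$ (with $L = \ln(1/\mu)$) does not hold uniformly across both branches, and you correctly flag this as "the delicate technical step" without resolving it. In fact it fails: in the branch where $\alg$ accepts the anchor, the paper places the anchor \emph{before} the Bernoulli block and arranges the Bernoullis in \emph{descending} order afterwards, so that $\alg$ accepts the anchor and wins only in the no-realization event ($\alg \le \mu$, not $L\mu$), while the order-aware optimum rejects the anchor and greedily accepts the first realized Bernoulli ($\opt \ge 1-\mu$, not $L\mu+\mu$). This gives ratio $\mu/(1-\mu)$. Your plan of placing the Bernoullis in ascending order after $t^*$ (so that the "exactly one realization" event contributes $L\mu$ to $\alg$) is the wrong arrangement for this branch: with ascending Bernoullis after the anchor, the order-aware algorithm that skips the anchor also only collects $\approx L\mu$, yielding a ratio of roughly $\mu/(L\mu) = 1/L \approx 0.93$, far weaker than the claimed $\approx 0.517$. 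The two branches genuinely need two different order constructions with two different decompositions ($\mu$ vs.~$1-\mu$, and $L\mu$ vs.~$L\mu+\mu$), which coincide only at the specific $\mu$ solving $\mu/(1-\mu) = L/(L+1)$.

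Second, the binary-search machinery with exponentially many dummies is unnecessary here and obscures the key identity-blindness argument. In the paper's proof only a \emph{single} history needs to be queried: the anchor (deterministic value $0.5$) is placed at a fixed position $n+1$, and the adversary examines whether $\alg$ accepts $0.5$ after observing $n$ zeros. The crucial point is that this history is \emph{identical} to the algorithm in both candidate orders (dummies-then-anchor-then-reals-descending, vs.~reals-ascending-then-anchor-then-dummies) on the event that nothing realizes before the anchor, precisely because $\alg$ is identity-blind and the anchor is a point mass. There is no need to adversarially locate a $t^*$ or to calibrate prefix decisions over $2^n$ positions: a single deterministic query at position $n+1$ with $n$ linear dummies suffices. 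The perturbation $1+i\delta$ of the Bernoulli values is likewise unnecessary—the paper simply uses values $1,\dots,n$, which are already distinct and exceed the anchor value $0.5$, so the maximum is automatically unique with probability $1$.
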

\begin{proof}
Let there be $n \to \infty$ real boxes, where the $i$-th box has value $i$ with probability $\varepsilon$ and $0$ otherwise. Here, $\varepsilon$ is the constant that $(1-\varepsilon)^n = \mu$,
Moreover, there is a special deterministic box with value $0.5$, and $n$ dummy boxes with deterministic values $0$.
Fix an arbitrary identity-blind deterministic algorithm $\alg$. Consider the following two cases depending on the behavior of $\alg$ when the first $n$ boxes have value $0$ and the $(n+1)$-th box has value $0.5$.
\paragraph{Case 1: Accept.}
Consider the instance when the $n$ dummy boxes arrived first, followed by the deterministic $0.5$ box, and finally the $n$ real boxes arrive in descending order.
Then, $\alg$ wins if and only if the $n$ real boxes are not realized, while the optimal algorithm should reject $0.5$ and accepts the first realized box (if exists) afterwards. Then the identity-blindness gap is at most $\frac{\mu}{1-\mu}$.
\paragraph{Case 2: Reject.}
Consider the instance when the $n$ real boxes arrived in ascending order first, followed by the deterministic $0.5$ box, followed by the $n$ dummy boxes.
It is straightforward to verify that the best strategy of $\alg$ is to accept greedily, i.e., accept the first realized box. In this case, it wins if and only if there is exactly one real box being realized. Consequently, $\alg = n \cdot \varepsilon \cdot (1-\varepsilon)^{n-1} \approx \mu \cdot \ln \left(\frac{1}{\mu}\right)$.
The optimal algorithm would do the same for the first $n$ boxes. Moreover, it would accept the deterministic $0.5$ box if none of the first $n$ boxes are realized. I.e., $\opt = n \cdot \varepsilon \cdot (1-\epsilon)^{n-1} + (1-\varepsilon) \approx \mu \cdot \ln \left(\frac{1}{\mu}\right) + \mu$. Thus, the identity-blindness gap is at most $\frac{\ln \left(\frac{1}{\mu}\right)}{\ln \left(\frac{1}{\mu}\right)+1}$.
\end{proof}

We next show an upper bound for deterministic single-threshold algorithms.
\begin{claim}
    No deterministic single-threshold algorithm (thus, identity-blind) can guarantee an identity-blindness gap of more than $\mu \approx0.4464$, where $\mu$ is the solution to the following equation. $$ \mu = \frac{\ln(\frac{1}{\mu})}{1+\ln(\frac{1}{\mu})}$$ 
\end{claim}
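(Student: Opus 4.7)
The plan is to construct, for any deterministic single-threshold algorithm with threshold $T$, an adversarial instance (and arrival order) whose gap is at most $\mu$. The construction augments the Case~2 instance from the proof of Theorem~\ref{thm:prob_single_neg} with a single deterministic low-value box $D$. Adding $D$ puts a large point mass at the maximum of the value distribution --- exactly the regime excluded by the no-large-point-mass assumption --- which the optimal order-aware algorithm can exploit as a ``fallback,'' while the identity-blind single-threshold algorithm cannot.

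For the main case $T > 0$, I would take $\varepsilon \to 0$ and $n \to \infty$ with $(1-\varepsilon)^n = \mu$, and use the following instance: $n$ Bernoulli ``real'' boxes where box $i$ has value $T + i\delta$ with probability $\varepsilon$ (and $0$ otherwise), for a tiny $\delta > 0$, together with a single deterministic box $D$ of value $d \in (0, T)$. The arrival order presents the real boxes first in ascending order of value, followed by $D$ at the end. Since $d < T$, the algorithm ignores $D$ entirely and accepts the first realized real box, which is the \emph{smallest} realized one; hence $\alg$ catches the maximum iff exactly one real box is realized, with probability $n\varepsilon(1-\varepsilon)^{n-1} \approx \mu \ln(1/\mu)$. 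The optimal order-aware algorithm may adopt the policy ``accept the first realized real box, else accept $D$,'' which additionally wins when no real box is realized (in which case $D$ is the global maximum), so $\opt \geq \mu \ln(1/\mu) + \mu$. The ratio is therefore at most
\[
\frac{\mu \ln(1/\mu)}{\mu \ln(1/\mu) + \mu} \;=\; \frac{\ln(1/\mu)}{1 + \ln(1/\mu)} \;=\; \mu,
\]
by the defining equation for $\mu$. For the degenerate case $T \le 0$, a trivial instance (place a $0$-valued box first, followed by a deterministic $1$-valued box) forces the algorithm to commit to $0$ immediately while $\opt$ trivially selects the $1$-valued box, yielding ratio $0 \le \mu$.

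The main obstacle, and the only nontrivial step, is to argue that the order-aware strategy described above is (asymptotically) optimal, so that $\opt$ is not meaningfully larger than $\mu(1+\ln(1/\mu))$. Because the real boxes are independent Bernoulli variables with strictly ascending values, the optimal online stopping rule for catching their maximum is of the threshold-on-time form: ``skip the first $k$ real boxes, accept the first realized one afterwards, else accept $D$.'' For such a rule, a direct computation gives winning probability $\nu \ln(1/\nu) + \mu$ with $\nu \eqdef (1-\varepsilon)^{n-k} \in [\mu, 1]$. Since $\mu \approx 0.446 > 1/e$ places the whole interval $[\mu, 1]$ inside the decreasing regime of $\nu \mapsto \nu \ln(1/\nu)$, the maximum is attained at $\nu = \mu$ (i.e., $k = 0$), which matches the strategy above and confirms the claimed upper bound on $\opt$, closing the argument.
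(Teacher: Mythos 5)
Your construction and calculation for the ``ascending order, threshold above the deterministic box'' case are correct and match the paper's Case~2 in spirit: $\alg \approx \mu\ln(1/\mu)$, $\opt \ge \mu\ln(1/\mu)+\mu$, and the ratio collapses to $\mu$ by the defining equation. However, there is a genuine logical gap in how you set up the adversary. You fix the algorithm's threshold $T$ \emph{first} and then build an instance $\dists_T$ around it (real boxes at $T+i\delta$, the deterministic box $D$ at $d\in(0,T)$). But a deterministic single-threshold algorithm in this paper chooses its threshold as a function of the instance $\dists$ (e.g., the quantile rule in Theorem~\ref{thm:max_prob_single}), so the threshold it actually uses on $\dists_T$ need not be $T$, and in particular may land \emph{below} $d$. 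If that happens, the algorithm accepts $D$ at the end whenever no real box is realized, so $\alg$ ties $\opt$ and the instance proves nothing. Your ``degenerate case $T\le 0$'' does not cover this; the problematic regime is $0<\tau\le d$.

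The paper sidesteps the circularity by fixing one concrete instance up front (real boxes at values $1,\dots,n$ and \emph{two} deterministic boxes at value $0.5$, one placed first and one placed last) and then case-splitting on the algorithm's threshold for \emph{that} instance. If the threshold exceeds $0.5$, the ascending order gives exactly your computation. If it is at most $0.5$, they switch to the \emph{descending} order: $\alg$ greedily grabs the leading $0.5$ box and wins only with probability $\mu$, while $\opt$ skips it, takes the first realized real box, and, if none realizes, falls back on the trailing $0.5$ box --- this is why the second deterministic box is needed, giving $\opt=1$ and ratio exactly $\mu$. Your single deterministic box at the end cannot serve both roles, and with only one $D$ placed first the resulting ratio would be $\mu/(1-\mu)\approx 0.806$, which is not good enough. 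To repair your argument you should fix the instance (not the threshold), add the second deterministic box, and add the descending-order case.

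One further remark: your third paragraph, arguing that the described $\opt$ strategy is \emph{optimal} so that $\opt$ is ``not meaningfully larger'' than $\mu(1+\ln(1/\mu))$, is both unnecessary and aimed in the wrong direction. To upper bound $\alg/\opt$ you only need a \emph{lower} bound on $\opt$, which you already obtain by exhibiting the ``accept first realized real box, else accept $D$'' policy; a larger $\opt$ would only help you.
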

\begin{proof}
Let there be $n \to \infty$ real boxes, where the $i$-th box has value $i$ with probability $\varepsilon$ and $0$ otherwise. Here, $\varepsilon$ is the constant that $(1-\varepsilon)^n = \mu$,
Moreover, there are two special deterministic boxes with a value of $0.5$.
Now consider two arrival orders: in both of them, the first and last boxes are the deterministic boxes, and in one order, the boxes arrive in an increasing order, and in the other, in a decreasing order.
If the algorithm selects the deterministic box it selects the maximum with probability $\mu$, while for the order where the boxes arrive in a decreasing order, the optimal algorithm that rejects the first box, and selects the first non-zero box afterward, selects the maximum with probability $1$.
If the algorithm does not select the deterministic box, then for the increasing arrival order, it selects the maximum value with a probability of $ n \cdot \varepsilon \cdot (1-\varepsilon)^{n-1} \approx \mu \cdot \ln \left(\frac{1}{\mu}\right)$, while the optimal algorithm that rejects the first box, and selects the first non-zero box selects the maximum with probability  of $ n \cdot \varepsilon \cdot (1-\varepsilon)^{n-1} \approx \mu \cdot \ln \left(\frac{1}{\mu}\right) + \mu $.
Thus, the identity-blindness gap of single threshold algorithms is at most $\min\left\{\mu, \frac{\mu\ln (\frac{1}{\mu})}{\mu + \mu \ln (\frac{1}{\mu})}\right\}$.
\end{proof} 
\end{document}